\newtheorem{prop}{Proposition}
\title[AAMAS-2021]{Accumulating Risk Capital Through Investing in Cooperation }
\author{Charlotte Roman, Michael Dennis, Andrew Critch, Stuart Russell}
 \affiliation{
  \department{Center for Human-Compatible AI }
  \institution{University of California, Berkeley}}
\email{c.d.roman@warwick.ac.uk, {michael_dennis,critch,russell}@cs.berkeley.edu}
\begin{abstract} 
Recent work on promoting cooperation in multi-agent learning has resulted in many methods which successfully promote cooperation at the cost of becoming more vulnerable to exploitation by malicious actors.  We show that this is an unavoidable trade-off and propose an objective which balances these concerns, promoting both safety and long-term cooperation. Moreover, the trade-off between safety and cooperation is not severe, and you can receive exponentially large returns through cooperation from a small amount of risk. We study both an exact solution method and propose a method for training policies that targets this objective, Accumulating Risk Capital Through Investing in Cooperation (ARCTIC), and evaluate them in iterated Prisoner's Dilemma and Stag Hunt.
\end{abstract}
\keywords{Social Dilemmas; Game Theory; Multi-agent Reinforcement Learning; Safety; Cooperation}
\begin{document}
\pagestyle{fancy}
\fancyhead{}

\maketitle 

\section{Introduction}

Sequential social dilemmas (SSDs) are games where short-term individual incentives conflict with the long-term social good.  Such games are pervasive and describe situations in which we would deploy automated multi-agent systems.  In many of these situations, we only control one of the agents involved, as is the case with self-driving cars where the other agent could be a human or a self-driving car from another company.  Many methods have been proposed for training policies that would better optimize social good~\cite{Lerer2017,Hughes2018, Eccles2019, Jaques2019}.  However, many of these methods do so at a greater risk of being exploited. 
A policy that optimizes for social welfare could allow other policies to be selfish without long-term consequences. So the welfare-maximizing policy may fare worse in these settings than
if it had only optimized for its own self-interest. To some extent, this is unavoidable since every choice to cooperate leaves room to be exploited.  However, though this trade-off is inevitable, it is not as stark as it first appears.

To study this trade-off formally, we present two objectives. The first objective is the well-studied notion of $\epsilon$-safety. Previous work shows that $\epsilon$-safe policies in sequential settings risk what the policy won in expectation \cite{Ganzfried2012}, allowing them in the long run to take much larger risks. The second objective is to perform well with cooperation-promoting policies. We define cooperation-promoting policies to be such that they are more cooperative when faced with cooperative policies. 
Since perfect safety would require only defection and perfect performance with cooperation-promoting policies would require only cooperation, these two objectives are in clear conflict.

Throughout this work, we find it convenient to conceive of $\epsilon$-safety using the concept of risk capital.  Risk capital is the amount of capital an investor is willing to lose.  In our case, we use risk capital to refer to the amount of utility we are willing to risk, which for $\epsilon$-safe policies is $\epsilon$.  In this framing, we can think of prior work, showing that $\epsilon$-safe policies risk what the policy won in expectation, as noticing that a policy does not need more risk capital for reinvesting their unexpected winnings.  
Since cooperation could always be met with defection, cooperating necessitates some amount of willingness to lose utility.  However, like an investment, cooperation often results in returning more utility than was risked.  This returned utility can then be reinvested without risking worse outcomes, leading to growing risk capital over time, thus more cooperation over time even for a small initial amount of risk capital.

This argument shows an interesting fact about the trade-off between safety and cooperation -- that though cooperation with total safety is impossible, giving away even a small amount of safety will lead to nearly optimal cooperation in the long-term.  We formalize this argument as a trade-off between safe beliefs and cooperation promoting beliefs.  We propose a method to train policies that satisfy this objective, which we call Accumulating Risk Capital Through Investing in Cooperation (ARCTIC), based on the idea of investing risk capital to achieve long-term cooperation. We test this method in the simple domains of prisoners dilemma and stag hunt. However, we expect investing in cooperation to be a broadly applicable principle for developing safe and cooperative agents.  To achieve full safety, we begin with fully safe beliefs and adapt the safety levels according to opponent play. A positive initial risk capital can be used with minimal effect on long-term safety while achieving cooperation more easily.

\subsection{Literature Review}
 The problem of cooperation in sequential settings has long been studied in game theory.  One of the most famous strategies is Rapport's tit-for-tat \cite{Rapoport1965}, which achieved great success in Axelrod's tournament \cite{Axelrod1981}.  In part, Axelrod attributed it's success to its ability to promote cooperation and is not
 exploitable after the first move.

 Human experiments of sequential social dilemmas often find cooperation as the most popular strategy. \citeauthor{Gardner1984} \cite{Gardner1984} use bounded rationality to explain the high levels of cooperation in lab experiments where participants played a common pool resource game and could communicate with each other. Human cooperation seen in social dilemmas is often founded in indirect reciprocity \cite{Rand2013}. One such suggestion, translucent game theory \cite{Capraro2019}, explains cooperative behaviors through the idea that others' could detect an intended defection and punish it, which serves as the inspiration for our use of policy-conditioned beliefs.  These transparency considerations, such as the ability to view opponent facial expressions, change behaviors in social dilemmas \cite{Hoegen2017,deMelo2018}.

 Sequential social dilemmas exist in a variety of real-world settings, such as traffic networks  \cite{Klein2016,Roman2021}, and are essential to the development of cooperative artificial intelligence \cite{Dafoe2020}.
 The social dilemmas we consider are Prisoner's Dilemma and Stag Hunt. In Prisoner's Dilemma, Press and Dyson \cite{Press2012} showed that there exist evolutionary dominant strategies that can only be outperformed by players who have a theory of mind about their opponent; and, when combining strategies with a memory of one and theory of mind, stable strategies have been classified \cite{Glynatsi2020}. For Stag Hunt, the effects of network topology are important for the emergence of cooperation in games played on a network \cite{Segboreck2010}. 
 
   \begin{table*}[h]
     \caption{Social dilemma payoffs are of the form shown (left) where $R>P$, $R>S$, $2R>T+S$, and either $T>R$, or $P>S$. Examples of payoff matrices for two-player SSDs (payoffs normalized on $[0,1]$): Stag Hunt (centre) and Prisoner's Dilemma (right).}
    \label{table:ssd}
    \vspace{-2mm}
      \begin{minipage}{0.3\linewidth}
    \centering
      \begin{tabular}{cc|c|c|}
          & \multicolumn{1}{c}{} & \multicolumn{1}{c}{$C$}  & \multicolumn{1}{c}{$D$} \\\cline{3-4}
           & $C$ & $(R,R)$ & $(S,T)$ \\\cline{3-4}
          & $D$ & $(T,S)$ & $(P,P)$ \\\cline{3-4}
          \\
        \end{tabular}
    \end{minipage}
    \begin{minipage}{0.3\linewidth}
    \centering
        \begin{tabular}{cc|c|c|}
         & \multicolumn{1}{c}{} & \multicolumn{1}{c}{$C$}  & \multicolumn{1}{c}{$D$} \\ \cline{3-4}
         & $C$ & $(1,1)$ & $(0,\nicefrac{3}{4})$ \\ \cline{3-4}
         & $D$ & $(\nicefrac{3}{4},0)$ & $(\nicefrac{1}{4},\nicefrac{1}{4})$ \\ \cline{3-4}
          \\
        \end{tabular}
    \end{minipage}
    \begin{minipage}{0.3\linewidth}
    \centering
        \begin{tabular}{cc|c|c|}
          & \multicolumn{1}{c}{} & \multicolumn{1}{c}{$C$}  & \multicolumn{1}{c}{$D$} \\\cline{3-4}
          & $C$ & $(\nicefrac{3}{4},\nicefrac{3}{4})$ & $(0,1)$ \\\cline{3-4}
          & $D$ & $(1,0)$ & $(\nicefrac{1}{4},\nicefrac{1}{4})$ \\\cline{3-4}
           \\
        \end{tabular}
    \end{minipage}
    \vspace{-2mm}
    \end{table*}
 In the reinforcement learning (RL) literature, temporal difference learning has been applied successfully to learn cooperation in Prisoner's Dilemma since it maximizes future rewards \cite{Masuda2009}. Leibo et al. \cite{Leibo2017} extended classic sequential social dilemma examples into the domain of deep reinforcement learning. Following on from this work, \citeauthor{Jaques2019} \cite{Jaques2019} used social influence as intrinsic motivation to achieve coordination and communication between agents in SSDs.
 Additionally, approximate Markov tit-for-tat \cite{Lerer2017} maintains desirable properties from tit-for-tat but applied to deep learning in general games. \citeauthor{Eccles2019} achieved reciprocity through a system of innovators who maximize their own rewards and imitators who learn to mimic their behaviors by measuring the niceness of actions \cite{Eccles2019,Eccles2019b}, resulting in good performance in SSDs. 
 
 There has also been work in the RL literature on promoting cooperation through opponent modeling. Learning with Opponent Learning Awareness (LOLA) exemplifies the work in this direction, which models the opponent's learning process to promote long-term cooperation \cite{Foerster2018}. Other work in this direction, such as  Convergence with Model Learning and Safety (CMLeS), models the memory bounded aspects of other agents to achieve cooperation \cite{Chakraborty2014}. Although all of these works promote methods of cooperation, they do so by changing the context to make cooperation an equilibrium or a stable point of the training regime. Instead of changing the context, we satisfy ourselves with policies that are only near equilibrium, in that they are $\epsilon$-safe, and construct a strategy that can cooperate in this context.
 
  Safe strategies are essential to restrict the impact of adversarial opponents \cite{Gleave2019}. \citeauthor{Ganzfried2012} \cite{Ganzfried2012} explore the necessary properties for a policy to allow for safe opponent exploitation. Yet, there is a gap in the literature for safety in SSDs. Most research is concerned with adapting tit-for-tat like mechanisms to handle adversaries, whereas our proposed algorithm guarantees the safety property.

\subsection{Paper Outline}
We begin in Section \ref{sec:prelims} by outlining the necessary mathematical preliminaries. In Section \ref{sec:safebeliefs}, we establish the conditions for the $\epsilon$-safety of beliefs. Then we define a policy-conditioned belief that incentivizes cooperation in SSDs in Section \ref{sec:coopbeliefs}. In Section \ref{sec:coopvsafe}, where we show that the trade-off between safety and cooperativeness decreases with length of the interaction. In Section \ref{sec:arctic}, we equip this cooperation inducing belief with the safety property and propose the ARCTIC algorithm. Additionally, we show ARCTIC has desirable properties in matrix SSDs: Prisoner's Dilemma and Stag Hunt. Finally, we prove its success at defending against adversaries whilst maintaining cooperation in RL agents in Section \ref{sec:RL}. 
\section{Mathematical Preliminaries} \label{sec:prelims}
   
 A game is a tuple $(N, A, u)$ where: $N=\{1,2,...,n\}$ is the set of agents; $A = A_1 \times ... \times A_n$ is the action space of all players, $n\geq 2$; and, utility functions $u = (u_1,...,u_n)$ where $u_i:A \rightarrow \mathds{R}$ is a convex utility function for player $i$. 
 The expected utility of player $i$ is denoted $E[u_i(\sigma_i, \sigma_{-i})]$.
 
 Denote the mixed strategy space of player $i$ as $\Sigma_i$ and opponent strategy space as $\Sigma_{-i} =\Pi_{j \neq i} \Sigma_j$.  Then the \textit{minimax value} of the game $v_i$ is the highest value player $i$ can guarantee without knowing their opponents' actions:  $v_i = \max_{\sigma_i \in \Sigma_i}\min_{\sigma_{-i} \in \Sigma_{-i}}  E[ u_i(\sigma_i,\sigma_{-i})]$.
 
 A strategy $\sigma_i$ is \textit{safe} if it guarantees at least the minimax value on average: $E[u_i(\sigma_i, \sigma_{-i})] \geq v_i$ for any $\sigma_{-i} \in \Sigma_{-i}$. A strategy $\sigma_i$ is \textit{$\epsilon$-safe} if, and only if, for all $\sigma_{-i} \in \Sigma_{-i}$ we have: $v_i - \epsilon \leq E[u_i(\sigma_i, \sigma_{-i})]$. The \textit{best-response} to a strategy $\sigma_{-i}$ is $BR(\sigma_{-i}):=\arg \max_{\sigma_i} E[u_i(\sigma_i,\sigma_{-i})]$. A Nash equilibrium is a strategy profile $\sigma$ such that $\sigma_i \in BR(\sigma_{-i})$ for all $i \in N$. 
 The Risk What You've Won in Expectation algorithm \cite{Ganzfried2012} plays an $\epsilon$-safe best response to a model of an opponent's strategy and achieves safety.
 
 Social dilemmas are games in which agents can either cooperate or defect where joint cooperation gains the highest total rewards but there is an incentive to deviate from this.  We will explore the conditions for beliefs in a sequential social dilemma  (SSD) to incentivize cooperation. The possible payoffs of each stage games are reward $R$ for mutual cooperation, punishment $P$ for mutual defection, sucker $S$ if exploited, and temptation $T$ for exploiting. Table~\ref{table:ssd} shows the form of a two-player matrix game to be social dilemma, such as Prisoner's Dilemma and Stag Hunt.

 Reinforcement learning (RL) problems are those which comprise an agent learning how to behave in an environment where they receive rewards for their actions. The environment is represented by a state variable, $s \in S$, and the principle task of the agent is to estimate the value of choosing an action, $a \in A$, given the current state. The goal is to then find an optimal policy stating the action to be taken in a given state to achieve the highest rewards. In multi-agent RL, multiple agents act in the same environment. 

 A \textit{Markov game} is a tuple $(N,S,(A^i),P,(R^i), \gamma)$ for $i \in N$: set of agents $N$ where $|N|>1$, state space $S$, action spaces $A^i$ with joint action space $A:= A^1 \times ... \times A^{|N|}$, Markovian transition model $P: S \times A \times S \mapsto [0,1]$, reward function $R^i: S \times A \times S \mapsto \mathds{R}$ and a discount factor $\gamma \in [0,1)$. The goal of an agent $i$ is to select a sequence of actions, or policy $\pi_i$, to maximize the cumulative discounted return $R_t= \sum_{k=0}^\infty \gamma^k r_{t+k+1}$. Such a sequence of actions is called the \textit{optimal policy} $\pi_i^*$. The joint policy $\pi: S \rightarrow \Delta(A)$ is defined as $\pi(a|s):= \prod_{i \in N} \pi_i(a_i|s)$. The state-value function for agent $i$, or \textit{value function}, $V_\pi:S \rightarrow \mathds{R}$ describes the expected value of following policy $\pi_i$ when your opponent follows policy $\pi_j$ from state $s$: $V^i_{\pi_i, \pi_{-i}}(s)= E[\sum_{t\geq 0} \gamma^t R^i_t| a^i_t \sim \pi^i(\cdot | s_t), s_0=s]$, where $-i$ represents all other agents in $N$.
 
\section{Safe beliefs} \label{sec:safebeliefs}
 As a mathematical convenience to represent both our safety criteria and our beliefs about the distribution of opponent strategies, we introduce policy-conditioned beliefs. A \textit{policy-conditioned belief} is a function $p_i:\Sigma_i \rightarrow \Sigma_{-i}$.  Note that these beliefs are only a mathematical convenience and should not be thought of as a literal reflection of our setting, since players move simultaneously, the strategy of player $i$ does not causally affect the other players' strategies on the same time step. However, policy-conditioned beliefs allow us to represent both the idea that our opponent is drawn from some fixed distribution and the idea that our opponent might be adversarial, in a single formalism.  The set of \textit{best responses} to a policy-conditioned belief $p_i$ is defined as \[ BR(p_i) := \arg \max_{\sigma_i \in \Sigma_i}  E[u_i(\sigma_i, p_i(\sigma_i))]. \]

 A policy-conditioned belief is \textit{$\epsilon$-safe} if, and only if, for all $\sigma_{i} \in BR(p_i)$, $\sigma_i$ is $\epsilon$-safe. An example of a safe belief is the adversarial policy-conditioned belief, we define this policy-conditioned belief as \[ p_i^{A}(\sigma_i) := \arg \min_{\sigma_{-i} \in \Sigma_{-i}}E[u_i(\sigma_i,\sigma_{-i})]. \] 
 
 The safety property is desirable to minimize risk in environments where an adversarial opponent can exploit a policy. However, playing a safe strategy may end up with suboptimal outcomes, such as in Prisoner's Dilemma and Stag Hunt. Thus, we shall classify $\epsilon$-safe strategies that can safely cooperate with other agents in any multi-agent games. The payoff matrix for players is assumed to be normalized on $[0,1]$ for the purpose of simplicity; otherwise, there is an additional coefficient of the range of payoffs\footnote{For bounded utilities where the greatest range in payoff a player can have is $K$, replace $\epsilon$-safe with $K\epsilon$-safe in Propositions \ref{prop:closetoNEissafe} and \ref{prop:safebeliefs}.}. 

    \begin{prop}\label{prop:closetoNEissafe}
    For any two-player game with a Nash equilibrium $\sigma^*$, $\forall \epsilon \in [0,1]$ and $\forall i \in N$, $\exists \sigma_i \in \Sigma_i$ such that $||\sigma_i - \sigma^*_i||_\infty \leq \epsilon$ and $\sigma_i$ is $\epsilon$-safe\footnote{For $\bm{x}=(x_1,..,x_n)$, the supremum norm is $||\bm{x}||_\infty:= \sup \{|x_1|, ..., |x_n| \}$.}.
    \end{prop}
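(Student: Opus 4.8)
The plan is to take the candidate strategy $\sigma_i$ to be a convex combination of the Nash equilibrium strategy $\sigma^*_i$ and a safe strategy, namely a minimax (maximin) strategy $\bar\sigma_i$ that guarantees the value $v_i$ against every opponent response. Concretely, I would set $\sigma_i := (1-\epsilon)\,\sigma^*_i + \epsilon\,\bar\sigma_i$. This is a valid mixed strategy since $\epsilon \in [0,1]$ and $\Sigma_i$ is convex, and the supremum-norm bound is immediate: $\sigma_i - \sigma^*_i = \epsilon(\bar\sigma_i - \sigma^*_i)$, and since both $\bar\sigma_i$ and $\sigma^*_i$ are probability vectors, each coordinate of $\bar\sigma_i - \sigma^*_i$ lies in $[-1,1]$, so $\|\sigma_i - \sigma^*_i\|_\infty \le \epsilon$. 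That disposes of the first requirement.

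For $\epsilon$-safety I would fix an arbitrary opponent strategy $\sigma_{-i} \in \Sigma_{-i}$ and lower-bound $E[u_i(\sigma_i,\sigma_{-i})]$. Using bilinearity of the expected utility in player $i$'s mixing,
\[
E[u_i(\sigma_i,\sigma_{-i})] = (1-\epsilon)\,E[u_i(\sigma^*_i,\sigma_{-i})] + \epsilon\,E[u_i(\bar\sigma_i,\sigma_{-i})].
\]
By definition of the maximin strategy, $E[u_i(\bar\sigma_i,\sigma_{-i})] \ge v_i$. For the first term I would use that payoffs are normalized on $[0,1]$, so $E[u_i(\sigma^*_i,\sigma_{-i})] \ge 0$. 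Hence $E[u_i(\sigma_i,\sigma_{-i})] \ge \epsilon v_i \ge v_i - \epsilon$, where the last inequality holds because $v_i \le 1$ (again by normalization) gives $\epsilon v_i - v_i = -(1-\epsilon)v_i \ge -(1-\epsilon)$, i.e. $\epsilon v_i \ge v_i - (1-\epsilon)\cdot 1 \ge v_i - \epsilon$ when $v_i \ge 0$; one checks the elementary inequality $\epsilon v_i \ge v_i - \epsilon$ holds for all $v_i \in [0,1]$, $\epsilon \in [0,1]$. Since $\sigma_{-i}$ was arbitrary, $\sigma_i$ is $\epsilon$-safe.

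There is one subtlety I expect to be the main thing to get right rather than a deep obstacle: the bound $E[u_i(\sigma^*_i,\sigma_{-i})] \ge 0$ is where normalization to $[0,1]$ is genuinely used, and it explains the footnote about replacing $\epsilon$ with $K\epsilon$ for utilities of range $K$ — there one only gets $E[u_i(\sigma^*_i,\sigma_{-i})]$ bounded below by the minimum payoff, and rescaling the argument costs a factor $K$. I would also double-check that a maximin strategy $\bar\sigma_i$ attaining $v_i$ exists; this is standard for finite games (the $\max$ and $\min$ in the definition of $v_i$ are attained over the compact simplices $\Sigma_i$, $\Sigma_{-i}$ with $E[u_i]$ continuous), and the statement restricts to two-player games where $v_i$ is exactly the minimax value as defined in the preliminaries. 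Finally, I would remark that the argument never used the Nash property of $\sigma^*$ beyond it being a fixed strategy in $\Sigma_i$ with payoffs in $[0,1]$, so the same construction in fact shows every strategy has an $\epsilon$-safe strategy within supremum-distance $\epsilon$; the Nash hypothesis is stated because this is how the proposition is applied later (e.g.\ in Proposition~\ref{prop:safebeliefs}).
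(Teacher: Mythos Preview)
Your key step is the claimed ``elementary inequality'' $\epsilon v_i \ge v_i - \epsilon$ for all $v_i,\epsilon\in[0,1]$, and this is simply false: take $v_i=1$, $\epsilon=0.1$, giving $\epsilon v_i = 0.1 < 0.9 = v_i-\epsilon$. Equivalently, $\epsilon v_i\ge v_i-\epsilon$ holds iff $\epsilon\ge v_i/(1+v_i)$, which fails whenever $v_i$ is large and $\epsilon$ is small. Your intermediate chain $\epsilon v_i \ge v_i-(1-\epsilon)\ge v_i-\epsilon$ breaks at the second inequality, which is just $2\epsilon\ge 1$. So the lower bound $E[u_i(\sigma_i,\sigma_{-i})]\ge \epsilon v_i$ that your mixing actually delivers does \emph{not} imply $\epsilon$-safety.

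This gap is not repairable by sharpening the estimate: with the paper's own definition of $\epsilon$-safe ($E[u_i(\sigma_i,\sigma_{-i})]\ge v_i-\epsilon$ for all $\sigma_{-i}$), the statement is false as written. In the $2\times 2$ pure-coordination game with payoffs $(1,1)$ on the diagonal and $(0,0)$ elsewhere, $(A,A)$ is a Nash equilibrium with $\sigma^*_1=(1,0)$ and $v_1=\tfrac12$. Any $\sigma_1=(1-t,t)$ with $\|\sigma_1-\sigma^*_1\|_\infty\le\epsilon$ has $t\le\epsilon$, and its worst-case payoff is $\min\{t,1-t\}=t\le\epsilon$; $\epsilon$-safety would require $t\ge v_1-\epsilon=\tfrac12-\epsilon$, which is impossible for $\epsilon<\tfrac14$. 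The paper's proof does not collide with this counterexample because it never actually compares to $v_i$: it mixes $\sigma^*_i$ toward a \emph{worst}-performing strategy $\sigma^{\min}_i$ (not a maximin one) and concludes only that $E[u_i(\sigma^*_i,BR(\sigma^*_i))]-E[u_i(\sigma_i,BR(\sigma_i))]\le\epsilon$, i.e.\ at most $\epsilon$ loss relative to the Nash payoff under best-responding opponents --- a different and weaker conclusion than $\epsilon$-safety as defined in Section~\ref{sec:prelims}. Your closing remark that the Nash hypothesis seemed unused was the right instinct that something was off.
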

    \begin{proof}
    Since we assumed the payoffs are normalized, we have \[ \max_{\sigma_i} E[u_i(\sigma_i,BR( \sigma_i))]= 1 \text{  and  } \min_{\sigma_i} E[u_i(\sigma_i,BR( \sigma_i))]= 0.\] Let $\sigma^{min}_i$ be such that $E[u_i(\sigma^{min}_i,BR( \sigma^{min}_i))]= 0$.
    For any $\sigma_i$, such that $||\sigma_i - \sigma^*_i||_\infty \leq \epsilon$, we can bound the expected utility loss by the worst case:
    \begin{align*}
    & E[ u_i(\sigma^*_i,BR( \sigma^*_i))] -  E[u_i(\sigma_i,BR( \sigma_i))] \leq \, \, \, \, \, \, \,\, \, \, \, \, \, \, \\ & \, \, \, \, \, \, \, \, \, \, \, \,\, \, \, \,\, \, \, \, \, \, \,  E[ u_i(\sigma^*_i,BR( \sigma^*_i))] - E[u_i(\sigma^\epsilon_i,BR( \sigma^\epsilon_i))] ,
    \end{align*}
    where $\sigma^\epsilon_i = (1-\epsilon)\sigma^*_i + \epsilon \sigma^{min}_i$.
    We can bound the right hand side of the inequality using $ E[u_i(\sigma^*_i,BR( \sigma^*_i))] \leq 1$ and, since we have assumed convex utility functions, $E[u_i(\sigma^\epsilon_i,BR( \sigma^\epsilon_i))] \geq 1 - \epsilon$.
    Thus, we have $\epsilon$-safety as $E[u_i(\sigma^*_i,BR( \sigma^*_i))] - E[u_i(\sigma_i,BR( \sigma_i))] \leq \epsilon$. 
    \end{proof}
 Now that we have found conditions for the existence of $\epsilon$-safe strategies, we will prove similar properties for policy-conditioned beliefs. Define a policy-conditioned belief $p_i$ to be \textit{$\epsilon$-close} to a policy-conditioned belief $p_i'$ if, and only if, \[\max_{\sigma_i \in \Sigma_i}||p_i(\sigma_i)-p_i'(\sigma_{i})||_\infty \leq \epsilon.\] 
    
    \begin{prop} \label{prop:closebeliefscloseutilities}
    For any belief $p_i$ that is $\epsilon$-close to $p^{A}_i$, $\forall \sigma_i \in \Sigma_i$, \[ E[u_i(\sigma_i, p_i(\sigma_{i}))] - E[u_i(\sigma_i, p^{A}_i(\sigma_{i}))] \leq \epsilon. \]
    \end{prop}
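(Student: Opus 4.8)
The plan is to exploit the only nontrivial feature of the expected utility here: for a fixed $\sigma_i$, the map $\sigma_{-i} \mapsto E[u_i(\sigma_i,\sigma_{-i})]$ is affine in the opponent's mixed strategy (this is just the usual multilinearity of expected payoff; no convexity is needed for this direction). Writing $m(a_{-i}) := E[u_i(\sigma_i, a_{-i})]$ for each opponent pure action $a_{-i}$, we have $E[u_i(\sigma_i,\sigma_{-i})] = \sum_{a_{-i}} \sigma_{-i}(a_{-i})\, m(a_{-i})$, so that
\[ E[u_i(\sigma_i, p_i(\sigma_i))] - E[u_i(\sigma_i, p^{A}_i(\sigma_i))] = \sum_{a_{-i}} \big( p_i(\sigma_i)(a_{-i}) - p^{A}_i(\sigma_i)(a_{-i}) \big)\, m(a_{-i}). \]
Call the coefficient vector $\delta := p_i(\sigma_i) - p^{A}_i(\sigma_i)$. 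By the $\epsilon$-closeness hypothesis $\|\delta\|_\infty \le \epsilon$, and since $\delta$ is a difference of two probability vectors its entries sum to $0$.

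First I would use the sum-zero property to recenter the payoff vector: for any constant $c$, $\sum_{a_{-i}} \delta(a_{-i})\, m(a_{-i}) = \sum_{a_{-i}} \delta(a_{-i})\,(m(a_{-i}) - c)$. Taking $c = \tfrac12$ and using the normalization $m(a_{-i}) \in [0,1]$ gives $|m(a_{-i}) - \tfrac12| \le \tfrac12$, hence the difference is at most $\tfrac12 \|\delta\|_1$. In the social dilemmas under consideration the opponent has two actions, so $\delta = (\delta_C, -\delta_C)$ with $|\delta_C| = \|\delta\|_\infty \le \epsilon$, giving $\tfrac12\|\delta\|_1 = |\delta_C| \le \epsilon$. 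Since $\sigma_i$ was arbitrary, this is exactly the claimed bound. (Concretely, in the binary case one can skip the recentering and just write the difference as $\delta_C\,(m(C) - m(D))$ with $|\delta_C| \le \epsilon$ and $|m(C) - m(D)| \le 1$ by normalization.)

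The only place that needs care is matching the $\|\cdot\|_\infty$ of the hypothesis to a bound on the change in expected utility: a naive estimate only controls this by $\|\delta\|_1$, which is why both the sum-zero recentering and the binary action structure are invoked — together they turn $\|\delta\|_1 \le 2\|\delta\|_\infty$ into the clean factor $\epsilon$. I would also remark, though it is not needed for the statement, that because $p^{A}_i(\sigma_i)$ is by definition the opponent response minimizing $E[u_i(\sigma_i,\cdot)]$, the bracketed quantity is in fact nonnegative, so the proposition simultaneously bounds its absolute value.
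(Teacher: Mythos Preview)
Your argument is correct for the two-action case the paper ultimately studies, and your remark that only affineness of $\sigma_{-i}\mapsto E[u_i(\sigma_i,\sigma_{-i})]$ is needed (not convexity) is sharper than what the paper invokes. The paper, however, takes a different route that avoids the binary restriction: it bounds $E[u_i(\sigma_i,p_i(\sigma_i))]$ above by the value at the specific mixture $(1-\epsilon)\,p^A_i(\sigma_i)+\epsilon\,p^{\max}_i(\sigma_i)$, where $p^{\max}_i(\sigma_i)=\arg\max_{\sigma_{-i}}E[u_i(\sigma_i,\sigma_{-i})]$, and then expands. This works for any number of opponent actions because $p^A_i(\sigma_i)$ can be taken to be a pure action (a simplex vertex), so moving $\epsilon$ mass off it and onto the single best-for-$i$ action is feasible within the sup-norm ball. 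Your direct expansion buys transparency; the paper's extremal-mixture bound buys generality. You can recover the general case inside your own framework by recentering at $c=m(a^\star)$, where $a^\star$ is the pure minimizer supporting $p^A_i(\sigma_i)$: then every $m(a)-c\ge 0$, the $a^\star$ term vanishes, and the remaining (nonnegative) entries of $\delta$ sum to at most $\epsilon$, giving $\sum_a \delta(a)(m(a)-c)\le \epsilon$ without the $|A_{-i}|/2$ factor.
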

    \begin{proof}
    For all $\sigma_{i} \in \Sigma_i$ and any policy-conditioned belief $p_i$ such that $p_i$ is $\epsilon$-close to $p^{A}_i$, can be bounded by
    \[ p_i(\sigma_{i}) = (1-\epsilon)p^{A}(\sigma_i) + \epsilon p^{max}_i(\sigma_i)  \]
    where $p^{max}_i(\sigma_i) = \arg \max_{\sigma_{-i}} E[u_i(\sigma_i, \sigma_{-i})]$.
    We can write the expected utility of belief $p_i$ as $ E[u_i(\sigma_i, p_i(\sigma_{i}))] = E[u_i(\sigma_i,(1-\epsilon)p^{A}_i(\sigma_i) + \epsilon p^{max}(\sigma_i))]$.
    Since the utility function is convex we can find an upper bound: 
    \begin{align*}
        & E[u_i(\sigma_i,(1-\epsilon)p^{A}_i(\sigma_i) + \epsilon p^{max}(\sigma_i))] \leq \, \, \, \, \, \, \, \, \, \, \, \, \, \, \\ & \, \, \, \, \, \, \, \, \, \, \,\, \, \, \, \, \, \, (1-\epsilon)E[u_i(\sigma_i,p^{A})] + \epsilon E[u_i(\sigma_i, p^{max}_i(\sigma_i))].
    \end{align*}
    Rearrange the difference in expected utility of $p_i$ and $p_i^{A}$:
    \begin{align*}
     & E[u_i(\sigma_i, p_i(\sigma_{i}))] -E[ u_i(\sigma_i, p_i^{A}(\sigma_{i}))] \\
     &\leq  (1-\epsilon)E[u_i(\sigma_i,p^{A})] + \epsilon E[u_i(\sigma_i, p^{max}_i(\sigma_i))] - E[u_i(\sigma_i, p_i^{A}(\sigma_{i}))] \\
     &\leq  \epsilon (E[u_i(\sigma_i, p^{max}_i(\sigma_i))] - E[u_i(\sigma_i, p_i^{A}(\sigma_{i}))]) \leq \epsilon(1-0) = \epsilon 
      \end{align*} Therefore, the inequality holds true.
      \end{proof}
 Thus, beliefs close to $p^A$ have similar expected utilities. Now we can address the safety property in the same context.
    \begin{prop} \label{prop:safebeliefs}
      If the policy-conditioned belief $p_i$ is $\epsilon$-close to the adversarial policy-conditioned belief $p_i^{A}$, and utilities are bounded on $[0,1]$ then $p_i$ is $\epsilon$-safe.
    \end{prop}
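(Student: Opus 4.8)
The plan is to unwind the definition of $\epsilon$-safety for beliefs. By definition, $p_i$ is $\epsilon$-safe precisely when every $\sigma_i^* \in BR(p_i)$ is $\epsilon$-safe, i.e.\ satisfies $E[u_i(\sigma_i^*,\sigma_{-i})] \geq v_i - \epsilon$ for all $\sigma_{-i}$. Since the minimum over $\sigma_{-i}$ of the left-hand side is exactly $E[u_i(\sigma_i^*, p_i^{A}(\sigma_i^*))]$ by the definition of the adversarial belief, it suffices to show $E[u_i(\sigma_i^*, p_i^{A}(\sigma_i^*))] \geq v_i - \epsilon$ for every $\sigma_i^* \in BR(p_i)$.

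First I would record two elementary facts. (i) For every $\sigma_i$ we have $E[u_i(\sigma_i, p_i(\sigma_i))] \geq E[u_i(\sigma_i, p_i^{A}(\sigma_i))]$, because $p_i^{A}(\sigma_i)$ is by construction the opponent response minimizing player $i$'s utility, so it does no better than the response $p_i(\sigma_i)$. (ii) $\max_{\sigma_i} E[u_i(\sigma_i, p_i^{A}(\sigma_i))] = \max_{\sigma_i}\min_{\sigma_{-i}} E[u_i(\sigma_i,\sigma_{-i})] = v_i$; in particular, picking any $\hat\sigma_i \in BR(p_i^{A})$ gives $E[u_i(\hat\sigma_i, p_i^{A}(\hat\sigma_i))] = v_i$.

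Then I would chain inequalities. Fix $\sigma_i^* \in BR(p_i)$ and $\hat\sigma_i \in BR(p_i^{A})$. Applying Proposition \ref{prop:closebeliefscloseutilities} to $\sigma_i^*$ yields $E[u_i(\sigma_i^*, p_i^{A}(\sigma_i^*))] \geq E[u_i(\sigma_i^*, p_i(\sigma_i^*))] - \epsilon$; the best-response property of $\sigma_i^*$ for $p_i$ gives $E[u_i(\sigma_i^*, p_i(\sigma_i^*))] \geq E[u_i(\hat\sigma_i, p_i(\hat\sigma_i))]$; fact (i) applied to $\hat\sigma_i$ gives $E[u_i(\hat\sigma_i, p_i(\hat\sigma_i))] \geq E[u_i(\hat\sigma_i, p_i^{A}(\hat\sigma_i))]$; and fact (ii) identifies the last quantity as $v_i$. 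Composing the chain gives $E[u_i(\sigma_i^*, p_i^{A}(\sigma_i^*))] \geq v_i - \epsilon$, which is exactly what is needed; since $\sigma_i^*$ was an arbitrary element of $BR(p_i)$, $p_i$ is $\epsilon$-safe.

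I do not expect a serious obstacle: the argument is a short chain of inequalities resting on Proposition \ref{prop:closebeliefscloseutilities} (which is where the normalization of utilities to $[0,1]$ is consumed) together with the definitions. The one point that warrants care is fact (ii) --- verifying that a best response to the adversarial belief $p_i^{A}$ attains \emph{precisely} the minimax value $v_i$, so that the right end of the chain is genuinely $v_i - \epsilon$ rather than some weaker bound; this is the only place the $\max$--$\min$ structure built into $p_i^{A}$ is used.
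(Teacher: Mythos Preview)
Your proof is correct and follows essentially the same route as the paper's: both apply Proposition~\ref{prop:closebeliefscloseutilities} to the best response $\sigma_i^*\in BR(p_i)$ and then argue that $E[u_i(\sigma_i^*, p_i(\sigma_i^*))] \geq v_i$ via the minimizing property of $p_i^{A}$. The only cosmetic difference is that the paper writes the latter step as an inequality between maxima, $\max_{\sigma_i} E[u_i(\sigma_i, p_i(\sigma_i))] \geq \max_{\sigma_i} E[u_i(\sigma_i, p_i^{A}(\sigma_i))] = v_i$, whereas you unpack that comparison by passing through an explicit $\hat\sigma_i \in BR(p_i^{A})$; the logical content is identical.
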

    \begin{proof}
       Since $p_i$ is $\epsilon$-close to the adversarial policy-conditioned belief $p_i^A$, then by Proposition \ref{prop:closebeliefscloseutilities}, $\forall \sigma_{i} \in \Sigma_i$ we have \[ E[u_i(\sigma_i, p_i(\sigma_{i}))] - E[u_i(\sigma_i, p_i^{A}(\sigma_{i}))] \leq \epsilon. \]  Take $\sigma_{i} \in BR(p_i)$, then 
       \begin{align*} 
       E[u_i(\sigma_i, p_i(\sigma_{i}))] & =  \max_{\sigma_i \in \Sigma_i} E[u_i(\sigma_i, p_i(\sigma_{i}))] \\
       & \geq  \max_{\sigma_i \in \Sigma_i} E[u_i(\sigma_i, p^{A}_i(\sigma_{i}))] \, = \,\, v_i. 
       \end{align*} 
       Thus, $v_i - \epsilon \leq  E[u_i(\sigma_i, p_i^{A}(\sigma_{i}))]$. Hence,  $\forall \sigma_i \in \Sigma_i$, $\sigma_i$ is $\epsilon$-safe.  This $p_i$ is $\epsilon$-safe.
    \end{proof}
    
    Proposition \ref{prop:safebeliefs} shows that we can append any policy-conditioned belief with some level of an adversarial policy-conditioned belief and it will be $\epsilon$-safe for some $\epsilon$. Consequently, we can take a policy-conditioned belief that naively cooperates with any opponent and bound its safety through creating an uncertainty; either this belief is true or they face an adversarial opponent.
 
\section{Cooperation Inducing Beliefs} \label{sec:coopbeliefs}
 In the previous section, we proved that we could take a cooperation inducing policy-conditioned belief and still maintain the safety property through uncertainty about whether the opponent faced is, in fact, an adversary. Now we must develop a policy-conditioned belief that we know will incentivize cooperative behavior in sequential social dilemmas. First, we consider two-player matrix SSD games. 
 
 Let the strategy of player $i$ be $(\alpha, \bar{\alpha})$ where $\alpha \in [0,1]$ is the intended probability of cooperating in the next round and $\bar{\alpha} \in [0,1]$ is the probability of cooperating for all subsequent rounds. This format is chosen for simplicity. Similarly, the current and future strategies of their opponent are denoted by $(\beta, 1-\beta, \bar{\beta}, 1-\bar{\beta})$. For discounted future returns, define the \textit{expected returns} $V_i: \Sigma_i \times \Sigma_i \times \Sigma_{-i} \times \Sigma_{-i}  \rightarrow \mathds{R}$ as
 \begin{align*}
     V_i((\alpha,\bar{\alpha}),(\beta,\bar{\beta})):= \,  E[u_i(\alpha, \beta)] \, + \, \,\, \sum_{t=1}^{n-1} \gamma^t E[u_i(\bar{\alpha},\bar{\beta})] 
 \end{align*}
 where $\gamma \in (0,1]$ is the discount factor. A policy-conditioned belief in the sequential game is a function $p_i:\Sigma_i \times \Sigma_i \rightarrow \Sigma_{-i} \times \Sigma_{-i}$.

 Although opponent strategies are unseen, suppose that a player believes that their chosen strategy for the next round will change the strategy of their opponent for all subsequent rounds depending on their level of cooperation. For some $x \in (0,1]$, if player $i$ chooses to cooperate with at least proportion $x$ then their opponent's future cooperation level will not decrease, and for cooperation less than $x$, their opponent's level of cooperation will not increase for future rounds. Let player $i$ such a policy-conditioned belief $p^C_i$, where $C$ stands for cooperation-promoting, formally defined as 
    \[  p^C_i(\alpha):= 
    \begin{cases}
       (\beta, \beta^+) & \alpha \geq x \\
        (\beta, \beta^-)              & \alpha < x 
    \end{cases}
    \]
 for some $x \in (0,1]$ where $\beta \leq \beta^+$ and $\beta \geq \beta^-$. This belief is similar to a belief the opponent plays tit-for-tat with the additional condition that the opponent can view intended mixed strategies and is willing to invest or reject risk capital in the long-term. 
 
 Let us find the necessary conditions on $p^C_i$ for cooperation to be an equilibrium strategy against similar agents, and hence, be a cooperation inducing policy-conditioned belief as required. For cooperation to occur, we require that defection is not a best-response, i.e. $\forall \alpha > 0$, $V_i((\alpha,\bar{\alpha}),p^C_i(\alpha)) \geq V_i((0,\bar{\alpha}),p^C_i(0,1))$.
    \begin{prop} \label{eq:coopbeliefs}
     For any matrix SSD and policy-conditioned belief $p^C_i$ where $\beta, \beta^+, \beta^-$ satisfy
    \[
        \alpha \beta(R+P-S-T)+\alpha(S-P)+ \sum_{t=1}^{n-1} \gamma^t (\beta^+-\beta^-)[\bar{\alpha}(R+P-S-T)+T-P] \geq 0   
    \]
    for some $\bar{\alpha} \in [0,1]$, has cooperation as a best-response. 
    \end{prop}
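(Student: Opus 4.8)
The plan is to convert the best-response requirement stated just before the proposition into the displayed inequality by a direct substitution, so the only genuine ingredient is the bilinear form of the one-shot expected payoff. First I would record that, summing over the four outcomes $(C,C),(C,D),(D,C),(D,D)$ with probabilities $\alpha\beta$, $\alpha(1-\beta)$, $(1-\alpha)\beta$, $(1-\alpha)(1-\beta)$ and payoffs $R,S,T,P$, and collecting terms,
\[
E[u_i(\alpha,\beta)] = \alpha\beta\,(R+P-S-T) + \alpha\,(S-P) + \beta\,(T-P) + P .
\]
Every cancellation below rests on this identity.

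Next I would unfold the two sides of the inequality $V_i\big((\alpha,\bar\alpha),p^C_i(\alpha)\big)\ge V_i\big((0,\bar\alpha),p^C_i(0)\big)$. Since $x\in(0,1]$ we have $0<x$, so the baseline deviation selects the branch $p^C_i(0)=(\beta,\beta^-)$, whereas an ``investing'' deviation $\alpha\ge x$ selects $p^C_i(\alpha)=(\beta,\beta^+)$, and the future intent $\bar\alpha$ is the same on both sides. Substituting the definition of $V_i$ together with the identity above, the inequality is equivalent to
\[
\big(E[u_i(\alpha,\beta)]-E[u_i(0,\beta)]\big) + \sum_{t=1}^{n-1}\gamma^t\big(E[u_i(\bar\alpha,\beta^+)]-E[u_i(\bar\alpha,\beta^-)]\big)\ge 0 .
\]
In the first difference the terms $\beta(T-P)+P$ cancel, leaving $\alpha\beta(R+P-S-T)+\alpha(S-P)$; in each summand of the second the terms $\bar\alpha(S-P)+P$ cancel, leaving $(\beta^+-\beta^-)\big[\bar\alpha(R+P-S-T)+T-P\big]$. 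Assembling these is exactly the hypothesis of the proposition, so whenever that hypothesis holds the deviation to full defection cannot strictly beat the cooperative strategy $(\alpha,\bar\alpha)$; that is, cooperation is a best response.

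There is essentially no hard step here: the argument is bookkeeping once the bilinear form is written down. The one point that deserves care is the case split in $p^C_i$ — one must restrict attention to deviations with $\alpha\ge x$ so the comparison is genuinely ``reward branch $\beta^+$ against punishment branch $\beta^-$'' (it is this gap $\beta^+-\beta^-\ge 0$ that lets the discounted future term offset a possibly negative one-shot term $\alpha(S-P)$), and one must observe that $\alpha=0$ falls in the $\alpha<x$ branch so the baseline uses $\beta^-$. I would also note that the one-shot contribution $\alpha[\beta(R+P-S-T)+(S-P)]$ is affine in $\alpha$, so it suffices to check the condition at an extreme admissible value of $\alpha$, and that $\bar\alpha(R+P-S-T)+T-P$ is affine in $\bar\alpha$ with value $R-S>0$ at $\bar\alpha=1$, which is why taking $\bar\alpha$ large makes the hypothesis easiest to meet.
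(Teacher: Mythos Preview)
Your proof is correct and takes essentially the same approach as the paper: both arguments expand $V_i((\alpha,\bar\alpha),p^C_i(\alpha))$ and $V_i((0,\bar\alpha),p^C_i(0))$ and subtract, arriving at the displayed inequality by direct algebra. Your version is a little tidier because you isolate the bilinear identity $E[u_i(\alpha,\beta)]=\alpha\beta(R+P-S-T)+\alpha(S-P)+\beta(T-P)+P$ up front and you make explicit the case split $\alpha\ge x$ versus $\alpha<x$ that the paper's proof uses only implicitly, but the route is the same.
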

    \begin{proof}
    For cooperation to be a best-response, we must have $V_i((\alpha,\bar{\alpha}),p^C_i(\alpha)) \geq V_i((0,\bar{\alpha}),p^C_i(0,1))$.
       We can write the expected returns of cooperating with positive probability in terms of payoffs as 
       \begin{align*}  V_i((&\alpha,\bar{\alpha}),p^C_i(\alpha)) =  \alpha \beta R  + \alpha(1-\beta)S  + \beta(1-\alpha)T + (1-\alpha)(1-\beta)P \\ & +  \sum_{t=1}^{n-1}\gamma^t[\bar{\alpha} \beta^+R \, + \bar{\alpha}(1-\beta^+)S + \beta^+(1-\bar{\alpha})T + (1-\bar{\alpha})(1-\beta^+)P]. 
       \end{align*}
       The expected returns of defecting ($\alpha = 0$) is
        \begin{align*}  
        V_i((&0,\bar{\alpha}),p^C_i(0,1)) = \beta T \, +  (1-\beta)P + \sum_{t=1}^{n-1} \gamma^t [\bar{\alpha} \beta^-R +  \bar{\alpha}(1-\beta^-)S + \\ & \beta^-(1-\bar{\alpha})T + (1-\bar{\alpha})(1-\beta^-)P].
       \end{align*}
       Now by substituting these into the inequality we get
        \[
        \alpha \beta(R+P-S-T)+\alpha(S-P)+ \sum_{t=1}^{n-1} \gamma^t (\beta^+-\beta^-)[\bar{\alpha}(R+P-S-T)+T-P] \geq 0,   
        \]
        as given. 
    \end{proof}

 If $\beta(R+P-S-T)+(S-P)>0$, then $E[u_i(\alpha, 1-\alpha, \beta, 1-\beta)] $ is increasing in $\alpha$. As such, full cooperation will be dominant, so $i$ will play $\alpha=1$. Otherwise, they will play $\alpha=x$, in which case $x$ should be set to 1 to induce fully cooperative behavior. $\beta^+-\beta^-$ is the change in cooperation that the next strategy will induce in the opponent's future strategy.

 These beliefs can be summarized as those who believe their next action will affect opponents' future strategies such that they prefer to cooperate now to avoid a reduction in future utility. We call a strategy that follows this system $p^C$. Other beliefs from the literature that are cooperation promoting could also be substituted here, such as translucency \cite{Capraro2019}.
\section{Trade-Off Between Cooperation and Safety}\label{sec:coopvsafe}
In the previous sections, we described how to use policy-conditioned beliefs to promote both safety and cooperation.  In this section, we will use these ideas to show how the two concepts are necessarily in tension but how this tension disappears in the long-run.  To make this concrete, we will define the value of cooperation to be the value expected against the policy-conditioned belief $p^C_i$, that is:
\[
V^C(\pi_i) = E[u_i(\pi_i, p^C_i(\pi_{i}))].
\]

Similarly, we can think of the value of safety to be the value expected against the policy-conditioned belief $p^A_i$, that is:
\[
V^A(\pi_i) = E[u_i(\pi_i, p^A_i(\pi_{i}))].
\]

We will define $\overline{V}^C = \max_{\pi}\{V^C(\pi_i)\}$ to be the optimal cooperative value and $\overline{V}^A = \max_{\pi}\{V^A(\pi_i)\}$ to be the optimal safe value.  For the sake of simplicity, we will assume that $\beta = \beta^- = 0$ and $x=1$, which would be the case for cooperation-promoting policies when defection is a safe strategy and cooperation is socially optimal. 

When we focus on the decision made in the first round, the trade-off is fairly stark. The more likely you are to cooperate, the more likely you are to be exploited. However, this also increases the likelihood of encouraging your opponent to cooperate in the following round(s).  In the second round, the story is largely the same.  Since we assume $\beta = \beta^- = 0$, the cooperation-promoting policy would defect on the first step.  Thus, the safety that your policy loses over the first two steps is proportional to the probability that you cooperate, and the amount that you lose against the cooperation-promoting policy over the first two steps for defecting is proportional to the probability that you defect.

However, when we look at the third step, the trade-off is less severe.  If $\pi_i$ cooperated with proportion $\alpha$ on the first time step and $\alpha$ is high enough that the cooperation-promoting policy cooperates with them on the second time step, it increases the budget of safety for $\pi_i$, now being able to cooperate at $\alpha + E[u_i(\pi_i,\sigma_{-i})] - v_i$ without exceeding the safety budget. If the policy were any more cooperative, the adversary would be motivated to cooperate on the second round to be able to exploit in future rounds.

This pattern continues until it is safe for our policy to cooperate deterministically.  At which point, our policy will continue to cooperate for the rest of the round.  This means that the trade-off between safety and performance with cooperation-promoting policies can be characterized by the probability of not cooperating early on, which over the long run is a negligible fraction of overall performance.  This is captured formally in Proposition \ref{prop:tradeoff}.

\begin{prop} \label{prop:tradeoff}
    Suppose that defection is a safe strategy and cooperation is socially optimal.  Let $\pi_i$ be $\epsilon$-safe.  Let $\alpha_t$ be the probability $\pi_i$ cooperates on round $t$ against a cooperation-promoting belief and assume $E[r_t] \leq d \alpha_{t-1} + v_i$ for some constant $d>0$, as is the case when $\beta = \beta^- = 0$.  Then: \[\overline{V}^C  - V^C(\pi_i) \geq \frac{I}{T} \overline{V}^C - d\epsilon \frac{1-\Phi_C^{I+1}}{1-\Phi_C}.\]
    Where $C = \frac{d}{P-S}$, $\Phi_x = \frac{1 + \sqrt{1+4x}}{2}$, notated as such because $\Phi_1$ is the golden ratio, and $I= \min\{\lceil-\log_{\Phi_C}(\epsilon)\rceil,T\}$.  Moreover, when $E[r_t] = d \alpha_{t-1} + v_i$, there is an $\epsilon$-safe policy $\overline{\pi}_i$ which makes this bound tight.
\end{prop}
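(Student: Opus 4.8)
The plan is to derive the stated inequality from two ingredients: an \emph{upper} bound, forced by $\epsilon$-safety, on how fast the cooperation probabilities $\alpha_t$ can grow; and the fact that, under $E[r_t]\le d\alpha_{t-1}+v_i$, the cooperative value forgone on round $t$ is $d$ times how far $\pi_i$ is from cooperating fully. I would measure values net of the per-round minimax guarantee $v_i$ (this leaves the loss $\overline V^C-V^C(\pi_i)$ unchanged and is what the bound appears to require), take $\gamma=1$ with $T$ the horizon, and use that defection being safe forces $P\ge S$; I assume $P>S$ so that $C=d/(P-S)$ is well defined. With this convention $\overline V^C\le d(T-1)$, and it is attained up to an $O(1)$ term by cooperating deterministically from the start.

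\textbf{Step 1: safety yields a budget recurrence.} Fix a round $t$ and consider the opponent $\mathcal{A}_t$ that reproduces, on rounds $1,\dots,t-1$, exactly the action trajectory that the cooperation-promoting belief induces against $\pi_i$, and then defects on rounds $t,\dots,T$. Since $\pi_i$ conditions only on observed actions, its play on rounds $1,\dots,t$ against $\mathcal{A}_t$ is distributionally identical to its play against $p_i^C$: it earns $E[r_s]$ in expectation for $s<t$ and cooperates with probability $\alpha_t$ on round $t$. Against the defecting $\mathcal{A}_t$ it then gets expected reward $v_i-\alpha_t(P-S)$ on round $t$ and at most $v_i$ on each later round. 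Summing and applying $\epsilon$-safety (total return against $\mathcal{A}_t$ is at least $Tv_i-\epsilon$) gives $\alpha_t(P-S)\le\epsilon+\sum_{s<t}(E[r_s]-v_i)$, which is the sequential ``risk what you have won'' principle of \cite{Ganzfried2012}. Substituting $E[r_s]-v_i\le d\alpha_{s-1}$ (with $\alpha_0=0$, since $\beta=0$ already forces $E[r_1]\le v_i$) and writing $B_t:=(P-S)\alpha_t$ yields the second-order recurrence $B_{t+1}\le B_t+C B_{t-1}$, $B_1=B_2=\epsilon$.

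\textbf{Step 2: solve it; Step 3: bound the loss.} The characteristic equation $x^2=x+C$ has dominant root $\Phi_C=\frac{1+\sqrt{1+4C}}{2}$ (its companion root $\psi_C=\frac{1-\sqrt{1+4C}}{2}$ being smaller in modulus by exactly $1$), so a Binet-type formula gives $B_t=\frac{\epsilon}{\sqrt{1+4C}}(\Phi_C^{\,t}-\psi_C^{\,t})$ and hence $\alpha_t=O(\epsilon\Phi_C^{\,t})$; in particular $\alpha_t<1$ for every $t\le I-1$, so at least $I$ rounds pass before $\pi_i$ may cooperate deterministically. Then $V^C(\pi_i)=\sum_{t=1}^{T}(E[r_t]-v_i)\le d\sum_{t=1}^{T-1}\alpha_t$, so $\overline V^C-V^C(\pi_i)\ge d\sum_{t=1}^{T-1}(1-\alpha_t)\ge d\sum_{t=1}^{I-1}(1-\epsilon\Phi_C^{\,t})=d(I-1)-d\epsilon\frac{\Phi_C^{\,I}-\Phi_C}{\Phi_C-1}$. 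I would then use the identity $\Phi_C^{\,I+1}-\Phi_C^{\,I}+\Phi_C-1=(\Phi_C-1)(\Phi_C^{\,I}+1)$ together with $\epsilon\Phi_C^{\,I}\ge1$ (which holds by the choice of $I$) to rewrite the right-hand side as $\ge\frac{I}{T}\overline V^C-d\epsilon\frac{1-\Phi_C^{\,I+1}}{1-\Phi_C}$, the claimed bound. For the tightness assertion I would take $\overline\pi_i$ to be the policy that saturates Step 1, $\alpha_t=\min\{1,B_t/(P-S)\}$ with $B_{t+1}=B_t+C B_{t-1}$: it is $\epsilon$-safe by construction, and when $E[r_t]=d\alpha_{t-1}+v_i$ holds with equality, every inequality above becomes an equality up to the additive $O(d)$ terms coming from the rounding of $I$ and from the $1/T$ factor.

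\textbf{Where the difficulty lies.} The crux is Step 1: one must argue carefully that $\pi_i$'s round-$t$ distribution against the ``mimic-then-defect'' opponent $\mathcal{A}_t$ genuinely matches its distribution against $p_i^C$, and that reinvesting the \emph{whole} accumulated surplus on round $t$ never exposes $\pi_i$ to a larger deficit on a later round --- so that the budget inequality is exactly the constraint $\epsilon$-safety imposes, neither loose nor tight in the wrong direction. The rest is bookkeeping: handling the companion root $\psi_C$ and the $(P-S),\sqrt{1+4C}$ factors so that the solved recurrence produces precisely $I=\min\{\lceil-\log_{\Phi_C}\epsilon\rceil,T\}$ and the geometric correction $d\epsilon\frac{1-\Phi_C^{\,I+1}}{1-\Phi_C}$, which is where the section's normalization conventions need to be invoked.
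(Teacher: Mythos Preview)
Your proposal is correct and follows essentially the same route as the paper: derive the safety-budget bound $\alpha_k(P-S)\le\epsilon+\sum_{s<k}(E[r_s]-v_i)$, substitute $E[r_s]-v_i\le d\alpha_{s-1}$ to get the recurrence $\tilde\alpha_k\le\tilde\alpha_{k-1}+C\tilde\alpha_{k-2}$, bound its solution by $\epsilon\Phi_C^t$, then split the sum at $I$ and apply the geometric-series formula. The only differences are cosmetic: you justify the budget bound via an explicit mimic-then-defect adversary (the paper simply asserts it), and you invoke a Binet-type formula with both roots where the paper uses a one-line induction on $\Phi_C+C=\Phi_C^2$.
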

\begin{proof}
    Let $\pi_i$ be $\epsilon$-safe so $\epsilon = \overline{V}^A - V^A(\pi_i)$.  The result follows from the fact that cooperation at each round gives a bound for safety, which we define to be $\tilde{\alpha}_{k}$ as follows: 
    \[
    \alpha_k \leq \frac{\epsilon + \sum\limits_{t=0}^{k-1} E[r_t] - kv_i}{P-S} \leq \frac{\epsilon + \sum\limits_{t=0}^{k-1} (d \alpha_{t-1} + v_i) - kv_i}{P-S} =  \tilde{\alpha}_{k}.
    \] 
    
    If we take these $\tilde{\alpha}_i$ to inductively define a policy $\overline{\pi}_i$, then when $E[r_t] = d \alpha_{t-1} + v_i$, $\overline{\pi}_i$ can be proven achieve the desired equality by changing all of the inequalities in the following proof to equalities.  By substitution we have:
    \begin{align*}
    \tilde{\alpha}_{k} &= \frac{\epsilon + \sum\limits_{t=0}^{k-1} (d \alpha_{t-1} + v_i) - kv_i}{P-S} = \frac{\epsilon + d \sum\limits_{t=0}^{k-1}  \alpha_{t-1}}{P-S} 
    \\
    &= \frac{\epsilon + d \alpha_{k-2} + d \sum\limits_{t=0}^{k-2} \alpha_{t-1} }{P-S}
    = \frac{\epsilon + d \sum\limits_{t=0}^{k-2} \alpha_{t-1}}{P-S}  + \frac{d}{P-S}  \alpha_{k-2}
    \\
    &\leq \tilde{\alpha}_{k-1}  + C \tilde{\alpha}_{k-2} 
    \end{align*}

    
    We can then show by induction that $\tilde{\alpha}_{t} \leq \epsilon \Phi_C^t$.  In the base case $\tilde{\alpha}_{0} = \epsilon = \epsilon \Phi_C^0$ and in the inductive case:
    \begin{align*}
    \tilde{\alpha}_{k} \leq \tilde{\alpha}_{k-1}  + C \tilde{\alpha}_{k-2} = \epsilon \Phi_C^{t-1} + C \epsilon \Phi_C^{t-2} = \epsilon (\Phi_C + C) \Phi_C^{t-2} = \epsilon\Phi_C^{t}.
    \end{align*}
    
Where the last inequality follows from the fact that $\Phi_C+C =\Phi_C^2$.  This inequality and $\alpha_k \leq 1$, allow us to bound $\overline{V}^C  - V^C(\pi_i)$ by simply summing the expected rewards:
\begin{align*}
    V^C(\pi_i) & = \sum\limits_{t=0}^T E[r_t] \leq d \sum\limits_{t=0}^T \alpha_{t-1} +  T v_i 
              \leq d \sum\limits_{t=0}^T \min\{\epsilon\Phi_C^{t}, 1\} +  \overline{V}^A.
\end{align*}

Since the socially optimal behavior is deterministic cooperation, the difference from optimal behavior only occurs when $\epsilon\Phi_C^{t} < 1$ which happens when $t<-\log_{\Phi_C}(\epsilon)$. If $-\log_{\Phi_C}(\epsilon)$ is bigger than $T$, the second term of the min would never occur in the summation.  Thus, the point where the summation starts using the second term is $I= \min\{\lceil-\log_{\Phi_C}(\epsilon)\rceil,T\}$. We plug this into the final inequality and use the geometric series formula to get:
\begin{align*}
V^C(\pi_i) &\leq d \sum\limits_{t=0}^{I} \epsilon(\Phi_C)^{t} +  \overline{V}^C (1-\frac{I}{T}) 
 =  d\epsilon \frac{1-\Phi_C^{I+1}}{1-\Phi_C}+\overline{V}^C (1-\frac{I}{T})
\end{align*}

We multiply both sides by $-1$ and add $\overline{V}^C$ for the desired result.  

\end{proof}

It is important to note that this trade-off, after a certain point, does not grow with $T$.  Moreover, the return on cooperation value for small reductions in the optimal safety value grows very quickly, as the small loss in safety can effectively be reinvested at each successive time step as the policy receives the gains from cooperation. Thus, in long iterated games, the optimal policy for this objective is nearly-optimal for the designer to deploy into either fully adversarial settings or fully cooperative settings. In the next section, we take this core insight as the motivation for an algorithm that uses the ideas of reinvesting this risk capital in order to achieve high degrees of both safety and cooperation.

\section{Accumulating Risk Capital Through Investing in Cooperation} \label{sec:arctic}
  In SSDs, if an opponent is cooperative, surplus payoff above the value of the game can be reinvested for higher long-term gain. However, we also need to be able to invest safely to avoid exploitation. 
  
  To begin with, we give agents an adversarial policy-conditioned belief to avoid exploitation and maintain safety\footnote{We could forgo perfect safety in the first round to increase cooperation between agents initially for better long-term rewards with only a small amount of risk.}. For any surplus capital gained, $\epsilon$, we can deviate our safe beliefs to be $\epsilon$-safe and begin to reinvest to build trust with opponents. One such belief is $p^{\epsilon C}_i := (1-\epsilon) p_i^{A}+\epsilon p^C_i$, where the sequential adversarial belief is \[ p_i^{A}(\sigma)=\arg \min_{\sigma_{-i} \in \Sigma_{-i}} E[u_i(\sigma_i,\sigma_{-i})],\] and $\beta, \beta^+, \beta^-$ satisfy the conditions in Proposition \ref{eq:coopbeliefs}. By Proposition \ref{prop:safebeliefs}, $p^{\epsilon C}_i$ is $\epsilon$-safe.

 Again, consider the two-player matrix SSDs of the form in Table \ref{table:ssd}. If $p_i^{A}(\sigma_i)=(0,1,0,1)$, which is true for both Stag Hunt and Prisoner's Dilemma, $\forall \sigma_i \in \Sigma_i$  we can write the belief $p^{\epsilon C}_i$ as
    \[  p^{\epsilon C}_i(\alpha)= 
    \begin{cases}
       (\epsilon \beta, \epsilon \beta^+) & \alpha \geq x \\
        (\epsilon \beta, \epsilon \beta^-)              & \alpha < x.
    \end{cases} \]
 The condition for cooperation to occur is now:
     \begin{equation} \label{eq:safecoopbeliefs}
        \alpha(P-S) - \epsilon \alpha \beta(R+P-S-T) \leq \sum_{t=1}^{n-1} \gamma^t \epsilon(\beta^+-\beta^-)[\bar{\alpha}(R+P-S-T)+T-P]  
    \end{equation}
 Naturally, for $\epsilon=1$ this always holds. Ideally, we want to choose $\alpha, \beta, \beta^+, \beta^-$ such that this holds for the smallest possible $\epsilon$ so that we need to least amount of risk to allow cooperation to be a best response to these beliefs.
       \begin{figure}[t]
        \centering
        \includegraphics[width=\linewidth]{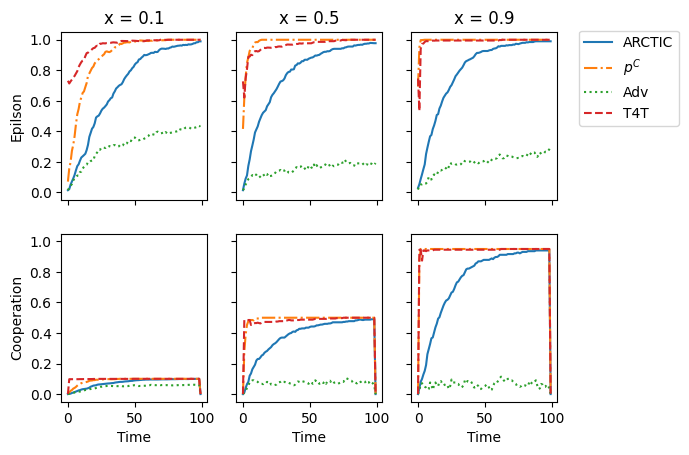}
        \caption{Simulations of ARCTIC playing different opponents in 100 rounds of Prisoner's Dilemma with various $x$. Against the adversarial strategy, ARCTIC does not learn to cooperate as there is not enough risk capital gained from their interactions. When two ARCTIC players interact, the risk capital slowly builds over time for all $x$ as they both play cautiously.}
        \label{fig:IPD_coop_ep}
        \Description{Graphs plotting cooperation and epsilon against time i.e., round number, for x values 0.1, 0.5, and 0.9. The cooperation for T4T and $p^C$ reaches x within the first few rounds for each x. With ARCTIC, x is reached at around round 90. Against the Adv player, cooperation stays below 0.1 for each x. The epsilon values increase over time for each value of x. With T4T and $p^C$ it reaches 1 at around time 40, 20 and 5 for values 0.1, 0.5, and 0.9 respectively. Against the ARCTIC opponent, epsilon reaches 1 at around 90-100 for each x. Against the Adv player, epsilon never exceeds 0.4.}
        \vspace{-10pt}
    \end{figure} 
 Pseudocode for the Accumulating Risk Capital Through Investing in Cooperation is given in Algorithm \ref{algo:ARCTIC}. Following from results in \cite{Ganzfried2012}, this algorithm is safe.
    \begin{algorithm}[ht]
    \SetAlgoLined
     Initialize
     $x \in [0,1]$,
     $\beta \in [0,1]$,
     $\beta^+ \gets 1$,
     $\beta^-  \gets 0$,
     $\epsilon \gets 0$,
     $v_i \gets $minimax value\;
     \For{$t=1$ to T}{
          $p_i \gets (1-\epsilon)p_i^{A}+\epsilon p_i^{C}$\;
          $\pi_i \gets \arg \max E[u_i(\sigma_i, p_i(\sigma_i))]$\;
          $i$ plays $\sigma_i$ from $\pi_i$, $-i$ plays $\sigma_{-i}$ from unknown $\pi_{-i}$\;
          $\epsilon \gets \min(\epsilon + E[u_i(\pi_i,\sigma_{-i})] - v_i, 1)$
         }
     \caption{ARCTIC}
     \label{algo:ARCTIC}
    \end{algorithm}

 We can think of $\epsilon$ as representing the amount of risk capital we are willing to invest in an opponent. For safe play in a sequential game, we begin with no risk capital and play the minimax strategy since we assume our opponent is adversarial. As the game goes on, the amount of risk capital will increase against non-adversarial opponents and we can safely invest such risk capital with the expectation of a return on the investment. Gradually, we build trust against similar opponents that reciprocate collaborative behaviors. 

 If our opponent has been cooperating in the past, then there is enough risk capital for ARCTIC to cooperate with such an opponent. However, if they then defect the amount of risk capital drops and they are more likely to defect in the next round, similar to how a tit-for-tat strategy punishes defections. If the risk capital is high enough, then punishment of a defection is less common since ARCTIC has learned to trust the good behavior of its opponent. This mechanism stops the strategy from being exploited against adversaries whilst maximizing cooperation with allies.
 
 \subsection{Matrix Games}
 To test the algorithm's performance, we simulated an ARCTIC agent for 100 rounds of Prisoner's Dilemma and Stag Hunt where its opponent either followed a simple strategy or best responded to their policy-conditioned beliefs. The simple strategies played against against were tit-for-tat (T4T) and pure defector (Adv), and the policy-conditioned belief opponents followed either ARCTIC or the cooperation inducing belief $p^C$. 
 Figures \ref{fig:IPD_coop_ep} and \ref{fig:SH_coop_ep} show the cooperation levels and risk capital $\epsilon$ for these experiments. Experiment parameters were: random action noise 5\%, $\bar{\alpha}=\alpha$, and the discount factor $\gamma=0.9$. Results were then averaged over 200 runs.
 
    \begin{figure}[t]
        \centering
        \includegraphics[width=0.95\linewidth]{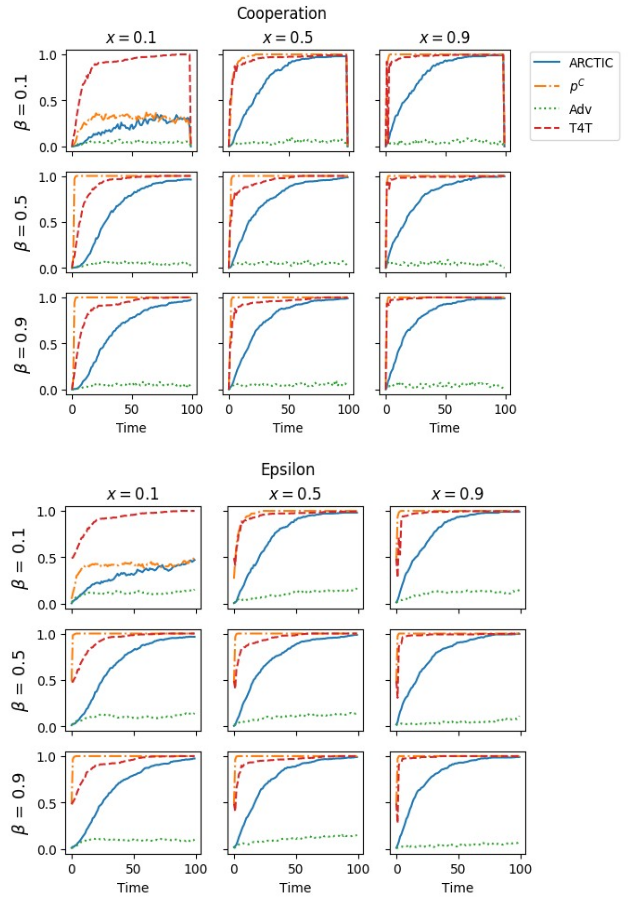}
        \caption{Simulations of 100 rounds of Stag Hunt whilst playing the ARCTIC strategy against 5 different opponent strategies. For cooperation to occur here, we need $x$ and $\beta$ to be large enough to satisfy condition \ref{eq:safecoopbeliefs}.}
        \label{fig:SH_coop_ep}
         \Description{Graphs plotting cooperation and epsilon against time i.e., round number, for x values 0.1, 0.5, and 0.9 and beta values of 0.1, 0.5 and 0.9. }
         \vspace{-12pt}
    \end{figure}
   In Prisoner's Dilemma (Figure \ref{fig:IPD_coop_ep}), the ARCTIC agent quickly learns to cooperate at rate $x$ against the cooperation incentivized T4T and $p^C$ players. When playing itself, the amount of risk capital, $\epsilon$, increases more gradually since they are learning to trust more cautiously than a T4T or $p^C$ player. Against the adversarial player, the cooperation levels a very low, and thus the ARCTIC agent maintains the safety property.
     \begin{figure*}[t]
     \centering
     \includegraphics[width=0.8\linewidth]{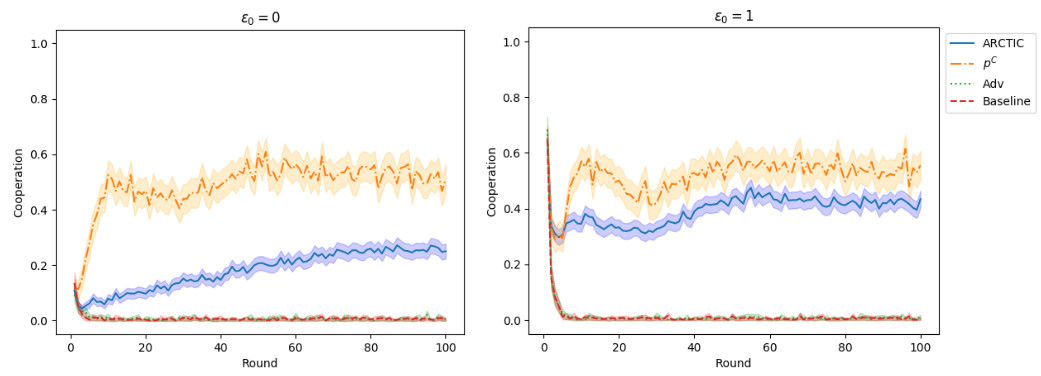}
     \caption{Cooperation of ARCTIC against different opponents over 100 rounds of Prisoner's Dilemma with initial risk capital of 0 (left) and 1 (right) when $x=0.5$. Against the adversarial and baseline opponents, ARCTIC learns to not cooperate, whereas when played against $p^C$ it cooperates around $x$. Against itself it cooperates at lower but increasing levels.}
     \Description{Graphs plotting cooperation against round number for $\epsilon_0$=0 and $\epsilon_1$=1. Against the $p^C$ player, the graphs look similar in that the cooperate at around 0.5 for most of the play, however, the initial cooperation is much lower for $\epsilon_0$=0. Against the ARCTIC player, with $\epsilon_0$=0, the cooperation slowly increase until about 0.2 at round 100. Whereas when $\epsilon_0$=1, it begins just below 0.4 and increases marginally. With the Adv and Baseline opponents with both values of $\epsilon_0$, the cooperation level quickly drops to 0.}
     \label{fig:pd_tournament}
      \vspace{-8pt}
 \end{figure*}

  Similarly, in Stag Hunt (Figure \ref{fig:SH_coop_ep}) the ARCTIC agent learns to cooperate with all but the adversary. For $x$ and $\beta$ large enough to satisfy condition (\ref{eq:safecoopbeliefs}), ARCTIC learns to fully cooperate by the end of the 100 rounds of play. Cooperation is much easier to achieve in Stag Hunt than in Prisoner's Dilemma since defect is not a dominant strategy. Against the adversary, not enough risk capital is collected for the best response to be cooperation which preserves the safety of the strategy. 
  
  Here, we have consider ARCTIC for payoffs normalized on $[0,1]$ as seen in Table \ref{table:ssd}. For games where payoffs have a range greater than 1, the $\epsilon$ update step should be \[ \epsilon \gets \min( \epsilon + \frac{1}{K} (E[u_i(\pi_i,\sigma_{-i})] -v_i), 1), \] where $K$ is the greatest difference between possible payoffs in order to normalize risk capital to be in $[0,1]$.
 
\section{Multi-agent Reinforcement Learning} \label{sec:RL}
 SSDs are often subgames of more complex multiplayer games. These types of games are of particular interest in multi-agent RL due to the difficulty of learning cooperative policies. In more complex games where teamwork is required, it is only sequences of actions that create cooperative or selfish behaviors. Thus, we consider long-term behaviors in order to capture the nature of the agents. 
 
 We assume that the minimax value, $v$, of the game can be determined. An opponent can therefore detect whether the agent is cooperating by measuring whether their rewards are at least the value of the game. The level of cooperation is now
  $ x_t := \sum^t_{k=0} \gamma^{t-k} \mathds{1}_{r_k > v} $.
  
 Let player $i$'s cooperation inducing policy-conditioned belief $p^C_i$, be defined as 
 \[  p^C(\pi_i):= 
    \begin{cases}
       (\pi_j,\pi_j^+) & x^i_t \geq x \\
       (\pi_j,\pi_j^-) & \text{otherwise}
    \end{cases}
    \]
 where $V_{\pi_i^{p^C}, \pi_j^+}(s) > V_{\pi_i^{p^C}, \pi_j^-}(s)$ for some threshold $x \in (0,1]$. 
 
 If cooperation level $x^t_i$ is above a certain level, then the opponent will behave cooperatively. Otherwise, they will act in their own self-interest. To train agents with these cooperative beliefs, we can adapt the reward functions of their opponents as such:
     \[  r^{-i}_t \gets
    \begin{cases}
       r^i_t + r_t^{-i} & x^i_t \geq x \\
       r_t^{-i} & \text{otherwise}
    \end{cases}
    \]
    
 To train an agent with a policy-conditioned belief, it can be trained in an environment where those beliefs are true and transferred into the standard environment for deployment. 
 
 We trained distributed asynchronous advantage actor-critic (A3C) \cite{Mnih2016} agents on Prisoner's Dilemma and Stag Hunt environments. Agent policies were trained with the policy-conditioned beliefs $p^C$ and ARCTIC where $x=0.5$ and $\bar{\alpha}=\alpha$. Agents without beliefs were the Baseline agent (unmodified A3C) and the Adv agent. The neural network consists of two fully connected layers of size 32 and a Long Short Term Memory (LSTM) recurrent layer \cite{Gers1999}. The learning rate for Baseline, Adv, and $p^C$ agents was 0.001. For ARCTIC, the learning rates were 0.00007 and 0.0001 for Prisoner's Dilemma and
 Stag Hunt respectively. The entropy coefficient was 0.01. The state space for the ARCTIC agents was a onehot encoded $\epsilon$ value. Each agent was trained on 3 different random seeds and results are average across these policies for 300 rollouts.

 \subsection{Prisoner's Dilemma}
 The difficulty of playing Prisoner's Dilemma with a generic multi-agent RL algorithm is that defection is a strictly dominant strategy and, thus, usually converge to defecting. This means that a mechanism for agents to cooperate must be used to promote cooperation, which leaves them open to exploitation. By using ARCTIC here, the agent still acts optimally, but it acts optimally with respect to their policy-conditioned belief. 
 
   \begin{figure*}[ht!]
     \centering
     \includegraphics[width=0.8\linewidth]{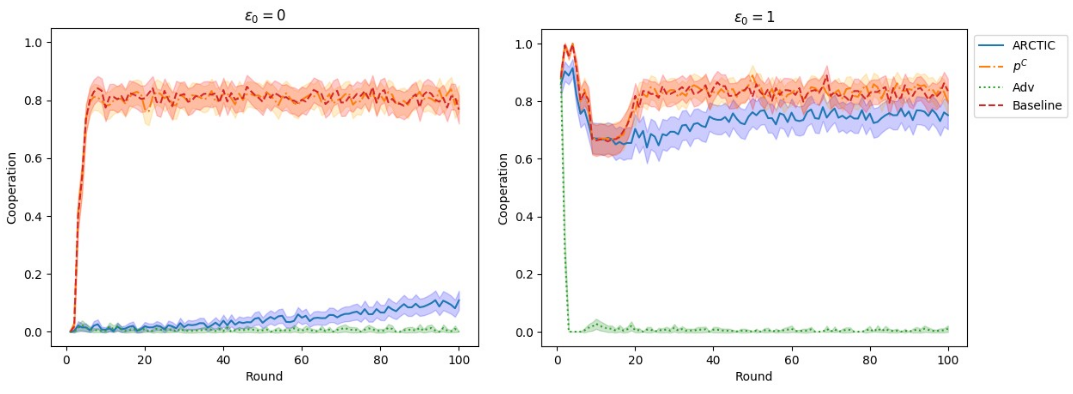}
     \caption{Cooperation of ARCTIC against different opponents over 100 rounds of Stag Hunt when starting with risk capital of 0 (left) or 1 (right). In both cases, ARCTIC cooperates the most with $p^C$ and baseline agents and the least with adversaries.}
     \label{fig:sh_tournament}
     \vspace{-5pt}
 \end{figure*}
 Table \ref{table:ipd_rl_results} shows the cumulative rewards for trained agents after 100 rounds of playing Prisoner's Dilemma. The baseline agent performs poorly against itself due to its inability to cooperate whereas the ARCTIC agent cooperates with itself some of the time and achieves a more socially optimal outcome. With the $p^C$ player, ARCTIC cooperates at higher levels than with itself. Against the adversary, ARCTIC achieves close to the value of the game on average. 
 \begin{table}[h]
     \caption{Agent scores for 100 rounds of Prisoner's Dilemma.}
     \label{table:ipd_rl_results}
     \vspace{-2mm}
     \begin{tabular}{c|c|c|c|c|}
     \multicolumn{1}{c}{} & \multicolumn{1}{c}{Baseline} & \multicolumn{1}{c}{ARCTIC} & \multicolumn{1}{c}{$p^C$} & \multicolumn{1}{c}{Adv} \\ \cline{2-5}
     Baseline & 25.01,\,25.01  & 25.54,\,24.83 & 70.29,\,9.91 & 25.00,\,25.01 \\ \cline{2-5}
     ARCTIC & 24.83,\,25.54  & 34.12,\,34.12 & 57.84,\,46.72 & 24.82,\,25.56  \\ \cline{2-5}
     $p^C$ & 9.91,\,70.29  & 46.72,\,57.84  & 55.21,\,55.21 & 9.89,\,70.34  \\ \cline{2-5}
     Adv & 25.01,\,25.00  & 25.56,\,24.82 & 70.34,\,9.89 & 25.00,\,25.00  \\ \cline{2-5}
    \end{tabular}
    \vspace{-2mm}
 \end{table}
 
 To try and improve the level of cooperation with itself, the initial level of risk capital can be increased to improve the cooperativeness with only increased risk in the first round of play. From Table~\ref{table:ipd_rl_results_eps_1}, we see that this leads to a more socially optimal outcome when played against itself without risking much against defectors. 
 \begin{table}[h]
   \caption{Scores for 100 rounds of Prisoner's Dilemma with $\epsilon_0 = 1$.}
     \label{table:ipd_rl_results_eps_1}
     \vspace{-2mm}
     \begin{tabular}{c|c|c|c|c|}
     \multicolumn{1}{c}{} & \multicolumn{1}{c}{Baseline} & \multicolumn{1}{c}{ARCTIC} & \multicolumn{1}{c}{$p^C$} & \multicolumn{1}{c}{Adv} \\ \cline{2-5}
     ARCTIC & 24.65,\,26.05 & 45.35,\,45.35 & 56.23,\,50.46 & 24.64,\,26.06  \\ \cline{2-5}
    \end{tabular}
    \vspace{-10pt}
 \end{table}

 \subsection{Stag Hunt}
 For an RL agent playing Stag Hunt, multi-agent RL algorithms learn to cooperate more effectively than Prisoner's Dilemma but this leaves them exploitable to adversaries. 
 
 In Table~\ref{table:sh_rl_results} the scores for agents in the Stag Hunt tournament can be found. Here, the baseline agent performs well against themself but achieve a poor score against adversaries. On the other hand, ARCTIC agents achieve the value of the game against adversaries.
  \begin{table}[h]
    \caption{Agent scores for 100 rounds of Stag Hunt.}
       \label{table:sh_rl_results}
        \vspace{-2mm}
     \begin{tabular}{c|c|c|c|c|}
     \multicolumn{1}{c}{} & \multicolumn{1}{c}{Baseline} & \multicolumn{1}{c}{ARCTIC} & \multicolumn{1}{c}{$p^C$} & \multicolumn{1}{c}{Adv} \\ \cline{2-5}
     Baseline & 99.53,\,99.53 & 78.61,\,94.31 & 99.78,\,99.34 & 0.13,\,74.84 \\ \cline{2-5}
     ARCTIC & 94.31,\,78.61  & 27.02,\,27.02  & 93.52,\,78.17 & 24.85,\,25.31  \\ \cline{2-5}
     $p^C$ & 99.34,\,99.78  & 78.17,\,93.52  & 98.66,\,98.66 & 0.34,\,74.43  \\ \cline{2-5}
     Adv & 74.84,\,0.13  & 25.31,\,24.85  & 74.43,\,0.34 & 25.02,\,25.02  \\ \cline{2-5}
    \end{tabular}
    \vspace{-2mm}
 \end{table}
 
  To encourage ARCTIC to cooperate more against itself, we can introduce a small amount of risk in the initial round. From Figure~\ref{fig:sh_tournament}, we see that ARCTIC achieves low levels of cooperation against themselves. 
 When ARCTIC is equipped with a positive initial risk capital of $\epsilon_0 = 1$, the ARCTIC agent is able to achieve more socially optimal outcomes against all players except for the adversary where it achieves marginally less than when $\epsilon_0 = 0$. See results in Table~\ref{table:sh_rl_results_eps_1}.
  \begin{table}[h]
   \caption{Scores for 100 rounds of Stag Hunt with $\epsilon_0 = 1$.}
     \label{table:sh_rl_results_eps_1}
       \vspace{-2mm}
     \begin{tabular}{c|c|c|c|c|}
     \multicolumn{1}{c}{} & \multicolumn{1}{c}{Baseline} & \multicolumn{1}{c}{ARCTIC} & \multicolumn{1}{c}{$p^C$} & \multicolumn{1}{c}{Adv} \\ \cline{2-5}
     ARCTIC & 95.19,\,81.92 & 73.36,\,73.36 & 94.56,\,82.03 & 24.60,\,25.82  \\ \cline{2-5}
    \end{tabular}
      \vspace{-2mm}
 \end{table}
 
\section{Conclusion}

In summary, we studied the trade-off between cooperation and safety, first showing how to unify these two objectives in the formalism of policy-conditioned beliefs and then characterizing a trade-off between them.  We find that small risks to safety can lead to large returns in cooperation.  We made this trade-off more intuitive through the idea of risk capital and seeing cooperation as the compounding returns on its investment. We use this intuition to build Accumulating Risk Capital Through Investing in Cooperation (ARCTIC) which enacts this trade-off, achieving safe cooperation in iterated prisoner's dilemma and stag hunt.  

Cooperating while maintaining approximate safety allows us to design agents that individual developers would want to use out of their own self-interests. This is a promising development but leaves open questions that will be important in more complex environments. For instance, when there are many styles of successful cooperative strategies, agent designers would need to coordinate on a particular style of cooperation or build their agents to be adaptive to other agents' techniques. Moreover, although our method protects the agent against adversaries, it does not protect the agent against exploitative agents, who want to maximize their reward, which happens to come at the cost of our reward. Although an ARCTIC agent will not achieve less than $v_i- \epsilon$ in expectation, they could be exploited into accepting less than their fair share of the reward as long as they receive more than $v_i$.  This becomes more complex when combined with the coordination problems, as different coordination solutions could have different payouts which must be somehow distinguished from exploitative strategies. Extending the ideas of risk capital to these settings is left to future work.

There are also interesting challenges in scaling ARCTIC to larger environments. Our method is currently reliant on both knowing an expected minimax value and a clear notion of cooperation. In larger environments, these are both less accessible. To extend ARCTIC to these settings, either environment features would have to be estimated or the reliance on these features would have to be removed. Ultimately, addressing these issues could lead to general algorithms for safe multi-agent cooperation.
 
\begin{acks}
  This work was supported by the Center for Human-Compatible AI and the Open Philanthropy Project.  We are grateful for funding of this work as a gift from the Berkeley Existential Risk Intuitive.
\end{acks}



\bibliographystyle{ACM-Reference-Format} 
\bibliography{references}


\begin{thebibliography}{26}


\ifx \showCODEN    \undefined \def \showCODEN     #1{\unskip}     \fi
\ifx \showDOI      \undefined \def \showDOI       #1{#1}\fi
\ifx \showISBNx    \undefined \def \showISBNx     #1{\unskip}     \fi
\ifx \showISBNxiii \undefined \def \showISBNxiii  #1{\unskip}     \fi
\ifx \showISSN     \undefined \def \showISSN      #1{\unskip}     \fi
\ifx \showLCCN     \undefined \def \showLCCN      #1{\unskip}     \fi
\ifx \shownote     \undefined \def \shownote      #1{#1}          \fi
\ifx \showarticletitle \undefined \def \showarticletitle #1{#1}   \fi
\ifx \showURL      \undefined \def \showURL       {\relax}        \fi
\providecommand\bibfield[2]{#2}
\providecommand\bibinfo[2]{#2}
\providecommand\natexlab[1]{#1}
\providecommand\showeprint[2][]{arXiv:#2}

\bibitem[\protect\citeauthoryear{Axelrod and Hamilton}{Axelrod and
  Hamilton}{1981}]%
        {Axelrod1981}
\bibfield{author}{\bibinfo{person}{Robert Axelrod} {and}
  \bibinfo{person}{William~Donald Hamilton}.} \bibinfo{year}{1981}\natexlab{}.
\newblock \showarticletitle{{The evolution of cooperation}}.
\newblock \bibinfo{journal}{\emph{Science}} \bibinfo{volume}{211},
  \bibinfo{number}{4489} (\bibinfo{year}{1981}), \bibinfo{pages}{1390--1396}.
\newblock


\bibitem[\protect\citeauthoryear{Capraro and Halpern}{Capraro and
  Halpern}{2019}]%
        {Capraro2019}
\bibfield{author}{\bibinfo{person}{Valerio Capraro} {and}
  \bibinfo{person}{Joseph~Y. Halpern}.} \bibinfo{year}{2019}\natexlab{}.
\newblock \showarticletitle{{Translucent players: Explaining cooperative
  behavior in social dilemmas}}.
\newblock \bibinfo{journal}{\emph{Rationality and Society}}
  \bibinfo{volume}{31}, \bibinfo{number}{4} (\bibinfo{year}{2019}),
  \bibinfo{pages}{371--408}.
\newblock
\showISSN{14617358}
\urldef\tempurl%
\url{https://doi.org/10.1177/1043463119885102}
\showDOI{\tempurl}


\bibitem[\protect\citeauthoryear{Chakraborty and Stone}{Chakraborty and
  Stone}{2014}]%
        {Chakraborty2014}
\bibfield{author}{\bibinfo{person}{Doran Chakraborty} {and}
  \bibinfo{person}{Peter Stone}.} \bibinfo{year}{2014}\natexlab{}.
\newblock \showarticletitle{{Multiagent learning in the presence of
  memory-bounded agents}}.
\newblock \bibinfo{journal}{\emph{Autonomous Agents and Multi-Agent Systems}}
  \bibinfo{volume}{28}, \bibinfo{number}{2} (\bibinfo{year}{2014}),
  \bibinfo{pages}{182--213}.
\newblock
\showISSN{13872532}
\urldef\tempurl%
\url{https://doi.org/10.1007/s10458-013-9222-4}
\showDOI{\tempurl}


\bibitem[\protect\citeauthoryear{Dafoe, Hughes, Bachrach, Collins, McKee,
  Leibo, Larson, and Graepel}{Dafoe et~al\mbox{.}}{2020}]%
        {Dafoe2020}
\bibfield{author}{\bibinfo{person}{Allan Dafoe}, \bibinfo{person}{Edward
  Hughes}, \bibinfo{person}{Yoram Bachrach}, \bibinfo{person}{Tantum Collins},
  \bibinfo{person}{Kevin~R. McKee}, \bibinfo{person}{Joel~Z. Leibo},
  \bibinfo{person}{Kate Larson}, {and} \bibinfo{person}{Thore Graepel}.}
  \bibinfo{year}{2020}\natexlab{}.
\newblock \showarticletitle{{Open problems in cooperative AI}}.
\newblock \bibinfo{journal}{\emph{arXiv preprint arXiv:2012.08630}}
  (\bibinfo{year}{2020}).
\newblock
\showISSN{23318422}


\bibitem[\protect\citeauthoryear{de~Melo, Khooshabeh, Amir, and Gratch}{de~Melo
  et~al\mbox{.}}{2018}]%
        {deMelo2018}
\bibfield{author}{\bibinfo{person}{Celso~M de Melo}, \bibinfo{person}{Peter
  Khooshabeh}, \bibinfo{person}{Ori Amir}, {and} \bibinfo{person}{Jonathan
  Gratch}.} \bibinfo{year}{2018}\natexlab{}.
\newblock \showarticletitle{{Shaping Cooperation between Humans and Agents with
  Emotion Expressions and Framing}}. In \bibinfo{booktitle}{\emph{Proc. of the
  17th Int. Conf. on Autonomous Agents and Multiagent Systems (AAMAS 2018)}}.
  \bibinfo{pages}{2224--2226}.
\newblock


\bibitem[\protect\citeauthoryear{Eccles, Hughes, Kram{\'{a}}r, Wheelwright, and
  Leibo}{Eccles et~al\mbox{.}}{2019a}]%
        {Eccles2019b}
\bibfield{author}{\bibinfo{person}{Tom Eccles}, \bibinfo{person}{Edward
  Hughes}, \bibinfo{person}{János Kram{\'{a}}r}, \bibinfo{person}{Steven
  Wheelwright}, {and} \bibinfo{person}{Joel~Z Leibo}.}
  \bibinfo{year}{2019}\natexlab{a}.
\newblock \showarticletitle{{Learning Reciprocity in Complex Sequential Social
  Dilemmas}}.
\newblock \bibinfo{journal}{\emph{arXiv preprint arXiv:1903.08082}}
  (\bibinfo{year}{2019}).
\newblock


\bibitem[\protect\citeauthoryear{Eccles, Hughes, Kramar, Wheelwright, and
  Leibo}{Eccles et~al\mbox{.}}{2019b}]%
        {Eccles2019}
\bibfield{author}{\bibinfo{person}{Tom Eccles}, \bibinfo{person}{Edward
  Hughes}, \bibinfo{person}{Janos Kramar}, \bibinfo{person}{Steven
  Wheelwright}, {and} \bibinfo{person}{Joel~Z. Leibo}.}
  \bibinfo{year}{2019}\natexlab{b}.
\newblock \showarticletitle{{The Imitation Game: Learned Reciprocity in Markov
  games}}. In \bibinfo{booktitle}{\emph{18th International Conference on
  Autonomous Agents and Multiagent Systems (AAMAS 2019)}}.
  \bibinfo{pages}{1934--1936}.
\newblock
\showISSN{00207276}
\urldef\tempurl%
\url{https://doi.org/10.1007/s00182-013-0370-1}
\showDOI{\tempurl}


\bibitem[\protect\citeauthoryear{Foerster, Chen, Al-Shedivat, Whiteson, Abbeel,
  and Mordatch}{Foerster et~al\mbox{.}}{2018}]%
        {Foerster2018}
\bibfield{author}{\bibinfo{person}{Jakob Foerster}, \bibinfo{person}{Richard~Y.
  Chen}, \bibinfo{person}{Maruan Al-Shedivat}, \bibinfo{person}{Shimon
  Whiteson}, \bibinfo{person}{Pieter Abbeel}, {and} \bibinfo{person}{Igor
  Mordatch}.} \bibinfo{year}{2018}\natexlab{}.
\newblock \showarticletitle{{Learning with opponent-learning awareness}}.
\newblock \bibinfo{journal}{\emph{Proceedings of the International Joint
  Conference on Autonomous Agents and Multiagent Systems, AAMAS}}
  \bibinfo{volume}{1} (\bibinfo{year}{2018}), \bibinfo{pages}{122--130}.
\newblock
\showISBNx{9781510868083}
\showISSN{15582914}


\bibitem[\protect\citeauthoryear{Ganzfried and Sandholm}{Ganzfried and
  Sandholm}{2012}]%
        {Ganzfried2012}
\bibfield{author}{\bibinfo{person}{Sam Ganzfried} {and} \bibinfo{person}{Tuomas
  Sandholm}.} \bibinfo{year}{2012}\natexlab{}.
\newblock \showarticletitle{{Safe opponent exploitation}}.
\newblock \bibinfo{journal}{\emph{Proceedings of the Adaptive and Learning
  Agents Workshop 2012, ALA 2012 - Held in Conjunction with the 11th
  International Conference on Autonomous Agents and Multiagent Systems, (AAMAS
  2012)}} \bibinfo{volume}{1}, \bibinfo{number}{212} (\bibinfo{year}{2012}),
  \bibinfo{pages}{119--126}.
\newblock
\showISBNx{9781450314152}


\bibitem[\protect\citeauthoryear{Gardner, Ostrom, and Walker}{Gardner
  et~al\mbox{.}}{1984}]%
        {Gardner1984}
\bibfield{author}{\bibinfo{person}{Roy Gardner}, \bibinfo{person}{Elinor
  Ostrom}, {and} \bibinfo{person}{James Walker}.}
  \bibinfo{year}{1984}\natexlab{}.
\newblock \showarticletitle{{Social capital and cooperation: Communication,
  bounded rationality, and behavioral heuristics}}.
\newblock In \bibinfo{booktitle}{\emph{Social Dilemmas and Cooperation}}.
  \bibinfo{publisher}{Springer}, \bibinfo{pages}{375--411}.
\newblock
\urldef\tempurl%
\url{https://doi.org/10.1007/978-3-642-78860-4{\_}20}
\showDOI{\tempurl}


\bibitem[\protect\citeauthoryear{Gers, Schmidhuber, and Cummins}{Gers
  et~al\mbox{.}}{1999}]%
        {Gers1999}
\bibfield{author}{\bibinfo{person}{Felix~A. Gers}, \bibinfo{person}{Jurgen
  Schmidhuber}, {and} \bibinfo{person}{Fred Cummins}.}
  \bibinfo{year}{1999}\natexlab{}.
\newblock \showarticletitle{{Learning to Forget: Continual Prediction with
  LSTM}}. In \bibinfo{booktitle}{\emph{9th International Conference on
  Artificial Neural Networks: ICANN '99}}. \bibinfo{pages}{850--855}.
\newblock
\showISSN{02732289}
\urldef\tempurl%
\url{https://doi.org/10.1385/ABAB:94:2:127}
\showDOI{\tempurl}


\bibitem[\protect\citeauthoryear{Gleave, Dennis, Wild, Kant, Levine, and
  Russell}{Gleave et~al\mbox{.}}{2019}]%
        {Gleave2019}
\bibfield{author}{\bibinfo{person}{Adam Gleave}, \bibinfo{person}{Michael
  Dennis}, \bibinfo{person}{Cody Wild}, \bibinfo{person}{Neel Kant},
  \bibinfo{person}{Sergey Levine}, {and} \bibinfo{person}{Stuart Russell}.}
  \bibinfo{year}{2019}\natexlab{}.
\newblock \showarticletitle{{Adversarial Policies: Attacking Deep Reinforcement
  Learning}}.
\newblock \bibinfo{journal}{\emph{arXiv preprint arXiv:1905.10615}}
  (\bibinfo{year}{2019}).
\newblock
\urldef\tempurl%
\url{http://arxiv.org/abs/1905.10615}
\showURL{%
\tempurl}


\bibitem[\protect\citeauthoryear{Glynatsi and Knight}{Glynatsi and
  Knight}{2020}]%
        {Glynatsi2020}
\bibfield{author}{\bibinfo{person}{Nikoleta~E. Glynatsi} {and}
  \bibinfo{person}{Vincent~A. Knight}.} \bibinfo{year}{2020}\natexlab{}.
\newblock \showarticletitle{{Using a theory of mind to find best responses to
  memory-one strategies}}.
\newblock \bibinfo{journal}{\emph{Scientific Reports}} \bibinfo{volume}{10},
  \bibinfo{number}{1} (\bibinfo{year}{2020}), \bibinfo{pages}{1--9}.
\newblock
\showISBNx{0123456789}
\showISSN{20452322}
\urldef\tempurl%
\url{https://doi.org/10.1038/s41598-020-74181-y}
\showDOI{\tempurl}


\bibitem[\protect\citeauthoryear{Hoegen, Stratou, and Gratch}{Hoegen
  et~al\mbox{.}}{2017}]%
        {Hoegen2017}
\bibfield{author}{\bibinfo{person}{Rens Hoegen}, \bibinfo{person}{Giota
  Stratou}, {and} \bibinfo{person}{Jonathan Gratch}.}
  \bibinfo{year}{2017}\natexlab{}.
\newblock \showarticletitle{{Incorporating emotion perception into opponent
  modeling for social dilemmas}}. In \bibinfo{booktitle}{\emph{Proceedings of
  the International Joint Conference on Autonomous Agents and Multiagent
  Systems, AAMAS}}. \bibinfo{pages}{801--809}.
\newblock
\showISBNx{9781510855076}
\showISSN{15582914}


\bibitem[\protect\citeauthoryear{Hughes, Leibo, Phillips, Tuyls,
  Due{\~{n}}ez-Guzman, Casta{\~{n}}eda, Dunning, Zhu, McKee, Koster, Roff, and
  Graepel}{Hughes et~al\mbox{.}}{2018}]%
        {Hughes2018}
\bibfield{author}{\bibinfo{person}{Edward Hughes}, \bibinfo{person}{Joel~Z.
  Leibo}, \bibinfo{person}{Matthew Phillips}, \bibinfo{person}{Karl Tuyls},
  \bibinfo{person}{Edgar Due{\~{n}}ez-Guzman}, \bibinfo{person}{Antonio~García
  Casta{\~{n}}eda}, \bibinfo{person}{Iain Dunning}, \bibinfo{person}{Tina Zhu},
  \bibinfo{person}{Kevin McKee}, \bibinfo{person}{Raphael Koster},
  \bibinfo{person}{Heather Roff}, {and} \bibinfo{person}{Thore Graepel}.}
  \bibinfo{year}{2018}\natexlab{}.
\newblock \showarticletitle{{Inequity aversion improves cooperation in
  intertemporal social dilemmas}}.
\newblock \bibinfo{journal}{\emph{Advances in Neural Information Processing
  Systems}} \bibinfo{volume}{2018-Decem}, \bibinfo{number}{NeurIPS}
  (\bibinfo{year}{2018}), \bibinfo{pages}{3326--3336}.
\newblock
\showISSN{10495258}


\bibitem[\protect\citeauthoryear{Jaques, Lazaridou, Hughes, Gulcehre, Ortega,
  Strouse, Leibo, and de~Freitas}{Jaques et~al\mbox{.}}{2019}]%
        {Jaques2019}
\bibfield{author}{\bibinfo{person}{Natasha Jaques}, \bibinfo{person}{Angeliki
  Lazaridou}, \bibinfo{person}{Edward Hughes}, \bibinfo{person}{Caglar
  Gulcehre}, \bibinfo{person}{Pedro~A. Ortega}, \bibinfo{person}{D.~J.
  Strouse}, \bibinfo{person}{Joel~Z. Leibo}, {and} \bibinfo{person}{Nando de
  Freitas}.} \bibinfo{year}{2019}\natexlab{}.
\newblock \showarticletitle{{Social influence as intrinsic motivation for
  multi-agent deep reinforcement learning}}.
\newblock \bibinfo{journal}{\emph{36th International Conference on Machine
  Learning, ICML 2019}}  \bibinfo{volume}{2019-June} (\bibinfo{year}{2019}),
  \bibinfo{pages}{5372--5381}.
\newblock
\showISBNx{9781510886988}


\bibitem[\protect\citeauthoryear{Klein and Ben-Elia}{Klein and
  Ben-Elia}{2016}]%
        {Klein2016}
\bibfield{author}{\bibinfo{person}{Ido Klein} {and} \bibinfo{person}{Eran
  Ben-Elia}.} \bibinfo{year}{2016}\natexlab{}.
\newblock \showarticletitle{{Emergence of cooperation in congested road
  networks using ICT and future and emerging technologies: A game-based
  review}}.
\newblock \bibinfo{journal}{\emph{Transportation Research Part C: Emerging
  Technologies}}  \bibinfo{volume}{72} (\bibinfo{year}{2016}),
  \bibinfo{pages}{10--28}.
\newblock
\showISSN{0968090X}
\urldef\tempurl%
\url{https://doi.org/10.1016/j.trc.2016.09.005}
\showDOI{\tempurl}


\bibitem[\protect\citeauthoryear{Leibo, Zambaldi, Lanctot, Marecki, and
  Graepel}{Leibo et~al\mbox{.}}{2017}]%
        {Leibo2017}
\bibfield{author}{\bibinfo{person}{Joel~Z. Leibo}, \bibinfo{person}{Vinicius
  Zambaldi}, \bibinfo{person}{Marc Lanctot}, \bibinfo{person}{Janusz Marecki},
  {and} \bibinfo{person}{Thore Graepel}.} \bibinfo{year}{2017}\natexlab{}.
\newblock \showarticletitle{{Multi-agent reinforcement learning in sequential
  social dilemmas}}.
\newblock \bibinfo{journal}{\emph{Proceedings of the International Joint
  Conference on Autonomous Agents and Multiagent Systems, AAMAS}}
  \bibinfo{volume}{1} (\bibinfo{year}{2017}), \bibinfo{pages}{464--473}.
\newblock
\showISBNx{9781510855076}
\showISSN{15582914}


\bibitem[\protect\citeauthoryear{Lerer and Peysakhovich}{Lerer and
  Peysakhovich}{2017}]%
        {Lerer2017}
\bibfield{author}{\bibinfo{person}{Adam Lerer} {and} \bibinfo{person}{Alexander
  Peysakhovich}.} \bibinfo{year}{2017}\natexlab{}.
\newblock \showarticletitle{{Maintaining cooperation in complex social dilemmas
  using deep reinforcement learning}}.
\newblock \bibinfo{journal}{\emph{arXiv}} (\bibinfo{year}{2017}).
\newblock
\showISSN{23318422}
\urldef\tempurl%
\url{http://arxiv.org/abs/1707.01068}
\showURL{%
\tempurl}


\bibitem[\protect\citeauthoryear{Masuda and Ohtsuki}{Masuda and
  Ohtsuki}{2009}]%
        {Masuda2009}
\bibfield{author}{\bibinfo{person}{Naoki Masuda} {and} \bibinfo{person}{Hisashi
  Ohtsuki}.} \bibinfo{year}{2009}\natexlab{}.
\newblock \showarticletitle{{A theoretical analysis of temporal difference
  learning in the iterated Prisoner's dilemma game}}.
\newblock \bibinfo{journal}{\emph{Bulletin of Mathematical Biology}}
  \bibinfo{volume}{71}, \bibinfo{number}{8} (\bibinfo{year}{2009}),
  \bibinfo{pages}{1818--1850}.
\newblock
\showISSN{00928240}
\urldef\tempurl%
\url{https://doi.org/10.1007/s11538-009-9424-8}
\showDOI{\tempurl}


\bibitem[\protect\citeauthoryear{Mnih, Badia, Mirza, Graves, Harley, Lillicrap,
  Silver, and Kavukcuoglu}{Mnih et~al\mbox{.}}{2016}]%
        {Mnih2016}
\bibfield{author}{\bibinfo{person}{Volodymyr Mnih},
  \bibinfo{person}{Adria~Puigdomenech Badia}, \bibinfo{person}{Lehdi Mirza},
  \bibinfo{person}{Alex Graves}, \bibinfo{person}{Tim Harley},
  \bibinfo{person}{Timothy~P. Lillicrap}, \bibinfo{person}{David Silver}, {and}
  \bibinfo{person}{Koray Kavukcuoglu}.} \bibinfo{year}{2016}\natexlab{}.
\newblock \showarticletitle{{Asynchronous methods for deep reinforcement
  learning}}.
\newblock \bibinfo{journal}{\emph{33rd International Conference on Machine
  Learning, ICML 2016}}  \bibinfo{volume}{4} (\bibinfo{year}{2016}),
  \bibinfo{pages}{2850--2869}.
\newblock
\showISBNx{9781510829008}


\bibitem[\protect\citeauthoryear{Press and Dyson}{Press and Dyson}{2012}]%
        {Press2012}
\bibfield{author}{\bibinfo{person}{William~H. Press} {and}
  \bibinfo{person}{Freeman~J. Dyson}.} \bibinfo{year}{2012}\natexlab{}.
\newblock \showarticletitle{{Iterated Prisoner's Dilemma contains strategies
  that dominate any evolutionary opponent}}.
\newblock \bibinfo{journal}{\emph{Proceedings of the National Academy of
  Sciences of the United States of America}} \bibinfo{volume}{109},
  \bibinfo{number}{26} (\bibinfo{year}{2012}), \bibinfo{pages}{10409--10413}.
\newblock
\showISSN{00278424}
\urldef\tempurl%
\url{https://doi.org/10.1073/pnas.1206569109}
\showDOI{\tempurl}


\bibitem[\protect\citeauthoryear{Rand and Nowak}{Rand and Nowak}{2013}]%
        {Rand2013}
\bibfield{author}{\bibinfo{person}{David~G. Rand} {and}
  \bibinfo{person}{Martin~A. Nowak}.} \bibinfo{year}{2013}\natexlab{}.
\newblock \showarticletitle{{Human cooperation}}.
\newblock \bibinfo{journal}{\emph{Trends in Cognitive Sciences}}
  \bibinfo{volume}{17}, \bibinfo{number}{8} (\bibinfo{year}{2013}),
  \bibinfo{pages}{413--425}.
\newblock
\showISSN{13646613}
\urldef\tempurl%
\url{https://doi.org/10.1016/j.tics.2013.06.003}
\showDOI{\tempurl}


\bibitem[\protect\citeauthoryear{Rapoport, Chammah, and Orwant}{Rapoport
  et~al\mbox{.}}{1965}]%
        {Rapoport1965}
\bibfield{author}{\bibinfo{person}{Anatol Rapoport}, \bibinfo{person}{Albert~M.
  Chammah}, {and} \bibinfo{person}{Carol~J. Orwant}.}
  \bibinfo{year}{1965}\natexlab{}.
\newblock \bibinfo{booktitle}{\emph{{Prisoner's Dilemma: A Study in Conflict
  and Cooperation}}}.
\newblock \bibinfo{publisher}{University of Michigan Press}.
\newblock


\bibitem[\protect\citeauthoryear{Roman and Turrini}{Roman and Turrini}{2021}]%
        {Roman2021}
\bibfield{author}{\bibinfo{person}{Charlotte Roman} {and}
  \bibinfo{person}{Paolo Turrini}.} \bibinfo{year}{2021}\natexlab{}.
\newblock \showarticletitle{{Bounding the Inefficiency of Route Control in
  Intelligent Transport Systems}}.
\newblock  (\bibinfo{year}{2021}).
\newblock
\urldef\tempurl%
\url{http://arxiv.org/abs/2104.00357}
\showURL{%
\tempurl}


\bibitem[\protect\citeauthoryear{Van~Segbroeck, de~Jong, Nowe, Santos, and
  Lenaerts}{Van~Segbroeck et~al\mbox{.}}{2010}]%
        {Segboreck2010}
\bibfield{author}{\bibinfo{person}{Sven Van~Segbroeck}, \bibinfo{person}{Steven
  de Jong}, \bibinfo{person}{Ann Nowe}, \bibinfo{person}{Francisco~C. Santos},
  {and} \bibinfo{person}{Tom Lenaerts}.} \bibinfo{year}{2010}\natexlab{}.
\newblock \showarticletitle{{Learning to coordinate in complex networks}}.
\newblock \bibinfo{journal}{\emph{Adaptive Behavior}} \bibinfo{volume}{18},
  \bibinfo{number}{5} (\bibinfo{year}{2010}), \bibinfo{pages}{416--427}.
\newblock
\showISBNx{9788993215052}
\showISSN{2468-2276}
\urldef\tempurl%
\url{https://doi.org/10.1016/j.sciaf.2019.e00146}
\showURL{%
\tempurl}


\end{thebibliography}



\end{document}